\newcommand*{\BraceAmplitude}{0.4em}%
\newcommand*{\VerticalOffset}{0.5ex}%
\newcommand*{\HorizontalOffset}{0.0em}%
\newcommand*{\blocktextwid}{3.0cm}%
\NewDocumentCommand{\InsertLeftBrace}{%
	O{} 
	O{\HorizontalOffset,\VerticalOffset} 
	O{\blocktextwid} 
	m   
	m   
	m   
}{%
	\begin{tikzpicture}[overlay,remember picture]
	\coordinate (Brace Top)    at ($(#4.north) + (#2)$);
	\coordinate (Brace Bottom) at ($(#5.south) + (#2)$);
	\draw [decoration={brace, amplitude=\BraceAmplitude}, decorate, thick, draw=black, #1]
	(Brace Bottom) -- (Brace Top) 
	node [pos=0.5, anchor=east, align=left, text width=#3, color=black, xshift=\BraceAmplitude] {#6};
	\end{tikzpicture}%
}%
\NewDocumentCommand{\InsertRightBrace}{%
	O{} 
	O{\HorizontalOffset,\VerticalOffset} 
	O{\blocktextwid} 
	m   
	m   
	m   
}{%
	\begin{tikzpicture}[overlay,remember picture]
	\coordinate (Brace Top)    at ($(#4.north) + (#2)$);
	\coordinate (Brace Bottom) at ($(#5.south) + (#2)$);
	\draw [decoration={brace, amplitude=\BraceAmplitude}, decorate, thick, draw=black, #1]
	(Brace Top) -- (Brace Bottom) 
	node [pos=0.5, anchor=west, align=left, text width=#3, color=black, xshift=\BraceAmplitude] {#6};
	\end{tikzpicture}%
}%
\NewDocumentCommand{\InsertTopBrace}{%
	O{} 
	O{\HorizontalOffset,\VerticalOffset} 
	O{\blocktextwid} 
	m   
	m   
	m   
}{%
	\begin{tikzpicture}[overlay,remember picture]
	\coordinate (Brace Top)    at ($(#4.west) + (#2)$);
	\coordinate (Brace Bottom) at ($(#5.east) + (#2)$);
	\draw [decoration={brace, amplitude=\BraceAmplitude}, decorate, thick, draw=black, #1]
	(Brace Top) -- (Brace Bottom) 
	node [pos=0.5, anchor=south, align=left, text width=#3, color=black, xshift=\BraceAmplitude] {#6};
	\end{tikzpicture}%
}%
\definecolor{cof}{RGB}{219,144,71}
\definecolor{pur}{RGB}{186,146,162}
\definecolor{greeo}{RGB}{91,173,69}
\definecolor{greet}{RGB}{52,111,72}
\theoremstyle{plain}
\newtheorem{theorem}{Theorem}
\newtheorem{lemma}{Lemma}
\newtheorem{remark}{Remark}
\newtheorem{corollary}{Corollary}
\newtheorem{definition}{Definition}
\newtheorem*{folklore}{Folklore}
\def \bP {\mathbb{P}}
\def \bR {\mathbb{R}}
\def\1{\mathbbm{1}}
\def \FF {\mathbb{F}}
\newcommand{\stepa}[1]{\overset{\rm (a)}{#1}}
\newcommand{\stepb}[1]{\overset{\rm (b)}{#1}}
\newcommand{\stepc}[1]{\overset{\rm (c)}{#1}}
\newcommand{\stepd}[1]{\overset{\rm (d)}{#1}}
\newcommand{\stepe}[1]{\overset{\rm (e)}{#1}}
\newcommand{\reals}{\mathbb{R}}
\definecolor{myblue}{rgb}{.8, .8, 1}
\definecolor{mathblue}{rgb}{0.2472, 0.24, 0.6} 
\definecolor{mathred}{rgb}{0.6, 0.24, 0.442893}
\definecolor{mathyellow}{rgb}{0.6, 0.547014, 0.24}
\newcommand{\blue}{\color{black}}
\newcommand{\nb}[1]{{\blue #1}}
\newcommand{\calC}{{\mathcal{C}}}
\newcommand{\calS}{{\mathcal{S}}}
\crefname{lemma}{Lemma}{Lemmas}
\Crefname{lemma}{Lemma}{Lemmas}
\crefname{thm}{Theorem}{Theorems}
\Crefname{thm}{Theorem}{Theorems}
\begin{document}

\title{Optimal Communication Rates and Combinatorial Properties for Common Randomness Generation}
\author{Yanjun Han, \IEEEmembership{Member,~IEEE}, Kedar Tatwawadi, Gowtham R. Kurri, \IEEEmembership{Member,~IEEE}, Zhengqing Zhou, Vinod M. Prabhakaran, \IEEEmembership{Member,~IEEE}, and Tsachy Weissman, \IEEEmembership{Fellow,~IEEE}\thanks{Yanjun Han and Kedar Tatwawadi contribute equally to this paper. The work of Gowtham R. Kurri and Vinod M. Prabhakaran was supported by the Department of Atomic Energy, Government of India, under Project RTI4001. The work of Vinod M. Prabhakaran was also supported by the Science \& Engineering Research Board, India through project MTR/2020/000308. This article was presented in part at the 2021 IEEE International Symposium on Information Theory (ISIT).

Yanjun Han was with the Department of Electrical Engineering, Stanford University, Stanford, CA 94305 USA. He is now with the Simons Institute for the Theory of Computing, University of California at Berkeley, Berkeley, CA 94720 USA (e-mail: yjhan@berkeley.edu).

Kedar Tatwawadi was with the Department of Electrical Engineering, Stanford University, Stanford, CA 94305 USA. He is now with WaveOne Inc. as a Research Scientist. (email: kedart@wave.one).

Gowtham R. Kurri was with the Tata Institute of Fundamental Research, India. He is now with the School of Electrical, Computer and Energy Engineering, Arizona State University, Tempe, AZ 85287 USA (email: gowthamkurri@gmail.com).

Zhengqing Zhou is with the Department of Mathematics, Stanford University, Stanford, CA 94305 USA (email: zqzhou@stanford.edu). 

Vinod Prabhakaran is with the School of Technology and Computer Science, Tata Institute of Fundamental Research, Mumbai 400005, India (email: vinodmp@tifr.res.in). 

Tsachy Weissman is with the Department of Electrical Engineering, Stanford University, Stanford, CA 94305 USA (email: tsachy@stanford.edu).  

Copyright \copyright \ 2021 IEEE. Personal use of this material is permitted.  However, permission to use this material for any other purposes must be obtained from the IEEE by sending a request to pubs-permissions@ieee.org.
}}

\maketitle

\begin{abstract}
We study \nb{common randomness generation} problems where $n$ players aim to generate \emph{same} sequences of random coin flips where some subsets of the players share an independent common coin which can be tossed multiple times, and there is a publicly seen blackboard through which the players communicate with each other. We provide a tight representation of the optimal communication rates via linear programming, and more importantly, propose explicit algorithms for the optimal distributed simulation for a wide class of hypergraphs. In particular, the optimal communication rate in complete hypergraphs is still achievable in sparser hypergraphs containing a path-connected cycle-free cluster of topologically connected components. Some key steps in analyzing the upper bounds rely on two different definitions of connectivity in hypergraphs, which may be of independent interest. 
\end{abstract}
\begin{IEEEkeywords}
\nb{Common randomness}, blackboard communication, optimal communication rate, combinatorics, hypergraph connectivity.
\end{IEEEkeywords}

%
%


\section{Introduction}
Common randomness, or shared randomness, refers to some external randomness known to all agents which enables them to take coordinated actions. The most classical application of common randomness is the generation of the secret key in cryptography \cite{AhlswedeC93}. This is also a valuable resource which aids diverse applications including developing randomized algorithms \cite{mitzenmacher2005probability}, reducing the communication complexity in distributed computing \cite{kushilevitz1996communication}, reducing the sample complexity in distributed inference \cite{acharya2019domain}, coordination among players in game theory \cite{AnantharamB07}, and quantum mechanics \cite{bennett2002entanglement}. In these applications, generating common randomness, or distributed simulation of the same random sequence, is of the utmost importance. 

In many scenarios, there is shared randomness within certain subsets of the agents, and sound communication strategies are necessary to generate common randomness for all agents. Consider the following simple example: Alice shares independent randomness with Bob and Carlo respectively, and Alice aims to broadcast as few messages as possible to Bob and Carlo so that they have access to some common randomness. The simplest strategy for Alice is to broadcast any random bit $R_0$, then they generate $1$ bit of common randomness with $1$ bit of communication. However, if Alice broadcasts $R_1 \oplus R_2$ where the bits $R_1$ and $R_2$ come from the shared randomness with Bob and Carlo, respectively, then they successfully generate $2$ bits of common randomness still with $1$ bit of communication (see Appendix \ref{subsec.k=2_easy} for more details). Hence, the communication resources may be saved under better strategies. 

In this paper, we consider a natural generalization of the above scenario: we are given a hypergraph $G=(V,E)$, where the vertex set $V=[n]$ is the set of $n$ players, and the edge set $E=\{e_1,\cdots,e_m\}$ consists of hyperedges $e_i\subseteq V$ representing the subsets of players sharing a common fair coin. We assume that the coins for different hyperedges are mutually independent. The players can toss the shared coins multiple times as a part of the communication strategy. \nb{In particular, the number of coin tosses for each hyperedge is not pre-determined and this allows for the scenario where different hyperedges could be used different times depending on the structure of the hypergraph.} We also assume that the players may communicate with each other via a blackboard communication protocol \cite{kushilevitz1997communication}, i.e. each player may write some messages on a publicly seen blackboard based on his shared coins and all current message on the blackboard. The blackboard communication protocol allows for interactive strategies and is stronger than both the \emph{simultaneous message passing} (SMP) protocol where each player writes messages on the blackboard independently of each other, and the sequential message passing protocol where players write messages sequentially but in a fixed order. The objective of the players is to generate the \emph{same} random variable (or vector) $X$ following a given target discrete distribution while minimizing the communication cost, i.e. the entropy of the message $M$ written on the blackboard. We define the communication rate as the ratio $H(M) / H(X)$, where $H(\cdot)$ denotes the Shannon entropy of discrete random variables. We provide a tight representation of the optimal communication rates via linear programming (see Theorem~\ref{thm.main.lower} and discussions followed). More importantly, we also propose explicit algorithms \nb{and investigate combinatorial properties} for the optimal \nb{common randomness generation} for a wide class of hypergraphs (Theorem~\ref{thm.main.upper}).

\subsection{Related works}

\nb{The role of common randomness (CR) has been given considerable attention in information theory literature starting from G\'{a}cs and K\"{o}rner \cite{gacs1973common} who characterized the maximum rate of common randomness that can be extracted from a pair of correlated random variables. Wyner~\cite{Wyner75} characterized the minimum rate of CR required for two processors to produce (approximately) independent copies of correlated random variables. CR was used for encoding and decoding in arbitrary varying channels by Ahlswede~\cite{Ahlswede78}, and Csisz\'ar and Narayan \cite{CsiszarN88}. CR generation with interactive communication between two players was studied by Ahlswede and Csisz\'{a}r~\cite{AhlswedeC98}. CR generation with a helper was studied by Csisz\'{a}r and Narayan~\cite{CsiszarN00}. CR generation via a network of discrete memoryless channels was studied by Venkatesan and Anantharam~\cite{VenkatesanA00}. Zhao and Chia~\cite{ZhaoC11} studied the relation between Hirschfeld-Gebelein-Renyi maximal correlation and CR generation. CR generation between two players which should be hidden from an eavesdropper was studied in secret key (SK) agreement by Maurer~\cite{Maurer93}, and Ahlswede and Csisz\'{a}r~\cite{AhlswedeC93}. Secret key agreement between multiple players was studied by Csisz\'ar and Narayan~\cite{CsiszarN04}. This is closely related to communication for omnicience~\cite{nitinawarat2010perfect,DingCQRS18}. The minimum communication rate required to generate secret key between two players was studied by Tyagi~\cite{Tyagi13}, and Ghazi and Jayram~\cite{ghazi2018resource}. Liu et al.~\cite{LiuCV17} characterized the trade-off between secret key and communication rates for a fixed number of communication rounds. Building on Tyagi~\cite{Tyagi13}, Mukherjee et al.~\cite{mukherjee2016public} derived a lower bound on this communication rate for SK agreement in the multiterminal source model.}

 \nb{A special source model, i.e. the \emph{hypergraphical source model}~\cite{chan2010mutual,Rouayhebetal}, where clusters of players share independent randomness, has received attention in various works which studied SK capacity as a function of the total communication rate~\cite{mukherjee2016public,CourtadeH16,ChanMKZ18,ZhouC20,chan2019secret}.  Courtade and Halford~\cite{CourtadeH16} considered the non-asymptotic one-shot version of the SK generation problem and characterized the minimum amount of communication needed under an assumption that communication is a linear function of the sources. Chan et al.~\cite{ChanMKZ18} studied the optimality of SK agreement via omniscience. Zhou and Chan~\cite{ZhouC20} studied minimally connected hypergraphs and characterized the optimal trade-off between secret key rate and communication rate tuple. Chan~\cite{chan2019secret} characterized a similar achievable rate region for any general hypergraph in terms of a polynomial-time computable linear program. Hypergraphical source model is a generalization of the Pairwise Independent Network (PIN) Model, where every pair of players share independent randomness, first introduced by Ye and Reznik~\cite{YeR07} and studied in \cite{Nitinawaratetal10,nitinawarat2010perfect,Chanetal19,mukherjee2016public}. Our work is also on the hypergraphical source model, but differs from the previous works in that we exploit the combinatorial nature of general hypergraphs. We remark that the hypergraph theory plays an important role in Theorem \ref{thm.main.upper}. Specifically, the two different notions of hypergraph connectivity presented in Theorem~\ref{thm.main.upper} aim to generalize the following folklore in different ways (see Lemmata~\ref{thm.connectivity} and \ref{lemma.path_connectivity_edge}):} 
\begin{folklore}
	A tree on $n$ vertices has exactly $n-1$ edges.
\end{folklore} 
For $k\ge 3$, a proper definition of trees in hypergraphs is required to generalize the above folklore. Recall that a tree enjoys two essential properties, i.e., \emph{connectivity} and \emph{cycle-free}, therefore a proper definition of connectivity is important. In combinatorics, the most common definition of connectivity is the path connectivity or its variants\cite{matrixTree3uniform, minimunSpanning2003, spanningTree3uniform}, which imposes constraints on \emph{vertices} and requires that any two vertices can reach each other through the $1$-dimensional skeleton of the hyperedges. Consequently, the cycle-free property can also be defined in terms of paths (cycles). There is also another less famous notion of hypergraph connectivity due to Kalai \cite{Kalai1983} which imposes constraints on the \emph{facets} of the hypergraph and requires them to be connected topologically. In the language of algebraic topology, a $k$-uniform hypergraph can be treated as a $(k-1)$-dimensional simplicial complex $\calC$, with the facets being the hyperedges. Then the hypergraph is topologically connected if and only if the $(k-2)$-skeleton of $\calC$ is full. The cycle-free property can then be defined as that the $(k-1)$-th simplicial homology of $\calC$ is 0 \cite{Kalai1983, simplicialMatrixTree}. From both directions we may obtain appropriate generalizations of the previous folklore (see Lemmas \ref{thm.connectivity} and \ref{lemma.path_connectivity_edge}, respectively), which constitute the key ingredients of Theorem \ref{thm.main.upper}.

\nb{The work by Mukherjee et al.~\cite{mukherjee2016public} deserves special mention. Specifically, it showed that if the $k$-uniform hypergraph, or in general any multiterminal source model, is \emph{of type $\calS$} (a notion introduced in \cite{mukherjee2016public}), then there is a strategy achieving the optimal communication rate $\frac{n-k}{n-1}$ and outputting each hyperedge (from a multi-hypergraph) exactly once. The main differences between our work and \cite{mukherjee2016public} are as follows. First, our achievability scheme is non-asymptotic (i.e. no blocklengths required) and combinatorial, while the scheme in \cite{mukherjee2016public} potentially requires large blocklengths and is more information-theoretic. Second, although the type $\calS$ condition is a nice ``if and only if'' result and could be checked efficiently in polynomial time for a given hypergraph (see also \cite{chan2015multivariate}), a rich combinatorial characterization about which family of hypergraphs are of type $\calS$ remains unclear. Our work aims to provide a partial answer to this combinatorial problem, and based on the fundamental notions of connectivity, proposes rich families of hypergraphs that achieve the optimal $\frac{n-k}{n-1}$ communication rate. Although our families of hypergraphs must be of type $\calS$, it is worth noting that so far we do not have a direct argument to connect them. Thus, our work presents an alternative approach which sheds more lights on the combinatorial perspective. }

We also review some literature on the communication complexity. First introduced in \cite{yao1979some}, the blackboard communication protocol serves as an elegant mathematical framework for the study of communication complexity. A series of research is devoted to the lower bounds in communication complexity, where the log rank is the prominent tool for all the deterministic \cite{mehlhorn1982vegas,yannakakis1991expressing}, nondeterministic \cite{karchmer1992fractional} and randomized communication complexities \cite{yao1983lower,newman1991private,krause1996geometric}. We refer to \cite{kushilevitz1996communication} for a survey of these methods. Another closely-related problem is distributed inference under communication constraints \cite{zhang2013information}, where distributed simulation of common randomness is useful for distributed learning and property testing \cite{acharya2018distributed,acharya2018inference}. To establish lower bounds on the communication complexity in distributed inference, the copy-paste property of the blackboard communication model typically plays an important role \cite{braverman2016communication,han2018geometric}. However, our technique to establish the lower bound is different, where only the sequential nature of the blackboard communication protocol is used in the proof of Theorem \ref{thm.main.lower}, which may be of independent interest. 

\section{Main Results}\label{Mainresults}
The first theorem presents a general lower bound of the communication rate for any hypergraph. 
\begin{theorem}\label{thm.main.lower}
	Let $G=(V,E)$ be any hypergraph. Let $X$ be the discrete random variable outputted by each vertex through a blackboard communication protocol, and $M$ be the message written on the blackboard. Then $H(M) / H(X) \ge t(G)$, where $t(G)$ is the solution to the following linear program: 
	\begin{align*}
	t(G) = \begin{cases}
	\min \qquad \sum_{v\in V} r_v, \\
	\text{\rm subject to } \qquad \sum_{v\in U} r_v \ge \sum_{e\in E: e\subseteq U} s_e, \quad \forall U\subsetneq V, \\
	\qquad \qquad \qquad \quad \sum_{e\in E} s_e \ge 1, \\
	\qquad \qquad \qquad \quad r_v, s_e \ge 0, \quad \forall v\in V, e\in E. 
	\end{cases}
	\end{align*}
\end{theorem}
\nb{A detailed proof of Theorem~\ref{thm.main.lower} is in Appendix~\ref{theorem1}. The linear program in Theorem~\ref{thm.main.lower} can be seen as a special case of a linear program~\cite[Corollary~2]{chan2019secret} (see also \cite{DingCQRS18}) in a closely related problem of secret-key agreement where it is also shown to be solvable in polynomial time. In fact, \cite[Corollary~2]{chan2019secret} implies the result in Theorem~\ref{thm.main.lower}}\footnote{\nb{We thank Chung Chan for pointing out to us that Theorem~\ref{thm.main.lower} follows from \cite[Corollary~2]{chan2019secret} and the fact that the associated linear program is solvable in polynomial-time. We note that Theorem~\ref{thm.main.lower} appeared in a version of the current paper~\cite{HanTZKPW20} (arXiv:1904.03271v2) slightly earlier than \cite{chan2019secret}~(arXiv:1910.01894v1) but without the observation of polynomial-time solvability.}}. 
Intuitively, the quantity $r_v$ denotes the length of the messages sent by player $v$, and $s_e$ denotes the number of random bits extracted from the hyperedge $e$ to generate the common output $X$. Therefore, the first inequality constraints require that for any graph cut $U\subsetneq V$, the amount of information communicated from the players in $U$ should at least cover the amount of randomness extracted out of hyperedges totally contained in $U$. These constraints also turn out to be tight in the sense that the optimal communication rate $t(G)$ can be attained asymptotically (as $H(X)$ goes to infinity) via {\blue linear network coding~\cite{nitinawarat2010perfect} - see Appendix~\ref{asymptachv} for details.}

Although Theorem \ref{thm.main.lower} (together with the asymptotic upper bounds) provides a tight characterization of the optimal communication rates for common randomness generation, the picture is still incomplete due to the following reasons. 
First, the existential proof of the network coding approach in Appendix~\ref{asymptachv} does not give an explicit communication strategy, and the result is asymptotic in the sense that large blocklengths are required and the communication rate only approaches but may never reach $t(G)$. Second, the linear program tells little about the combinatorial properties of the hypergraphs where a small communication rate is possible. For example, which hypergraphs are as good as the complete graphs? 

To answer these questions, in this paper we propose explicit algorithms of communication strategies and investigate the combinatorial properties of hypergraphs which lead to a small communication rate, at the expense of losing certain generalities. \nb{First we investigate some basic properties of $t(G)$ for general hypergraphs.
\begin{corollary}\label{cor.upper}
It always holds that $t(G)\le 1$ for any hypergraph $G$, with equality if and only if $G$ is disconnected (in the usual sense of path connectivity formally defined in Definition \ref{def.path_connectivity}).
\end{corollary}

A proof of Corollary \ref{cor.upper} is given in Appendix \ref{corollary1proof}. Next we turn to the lower bound of $t(G)$, and investigate} the hypergraph structures which perform equally well as the complete $k$-uniform hypergraphs. Note that a hypergraph $G=(V,E)$ is called $k$-uniform if for all hyperedges $e\in E$ we have $|e|=k$. The following corollary follows immediately from Theorem \ref{thm.main.lower}. 

\begin{corollary}\label{cor.lower}
If $G=(V,E)$ is a $k$-uniform hypergraph, then
	$$
	t(G) \ge \frac{n-k}{n-1}. 
	$$
\end{corollary}

\nb{A proof of Corollary \ref{cor.lower} is given in Appendix~\ref{corollary2proof}.} By Corollary \ref{cor.lower}, it remains to find hypergraph structures and explicit communication strategies where the optimal rate $(n-k)/(n-1)$ is achievable. \nb{It turns out that the simple graph case $k=2$ admits an explicit characterization of $t(G)$. 

\begin{corollary}\label{cor.simplegraph}
If $G$ is a simple graph (i.e. $2$-uniform), then 
\begin{align*}
	t(G) = \begin{cases}
		1 &\text{if } G\text{ is not connected}, \\
		\frac{n-2}{n-1} &\text{if } G\text{ is connected}. 
	\end{cases}
\end{align*}
\end{corollary}
In Corollary \ref{cor.simplegraph}, the case of disconnected graphs follows from Corollary \ref{cor.upper}, and that of connected graphs follows from the lower bound of $t(G)$ in Corollary \ref{cor.lower} and an explicit achievability strategy in Appendix \ref{sec.example}. Therefore, both Corollaries \ref{cor.upper} and \ref{cor.simplegraph} show that hypergraph connectivity plays a central role in achieving a small communication rate $t(G)$, and one may wonder whether the lower bound of Corollary \ref{cor.lower} is achievable whenever the hypergraph is connected. 
}However, this does not generalize to any $k$-uniform hypergraphs with $k\ge 3$ under the usual notion of path connectivity for graphs, and a number of path-connected hypergraphs are too sparse to achieve a small communication rate. It also becomes challenging to propose an achievability scheme even if $k=3$. The following theorem shows that under the correct definitions of connectivity, the optimal rate of communication is attainable. 

\begin{theorem}\label{thm.main.upper}
Let $G=(V,E)$ be a $k$-uniform hypergraph, with $1\le k\le n$. If $G$ is a path-connected cycle-free cluster (cf. Definition \ref{def.cluster}) of topologically connected components (cf. Definition \ref{def.connectivity}), then there exists an explicit communication strategy under the simultaneous message passing protocol such that for some $m\in \mathbb{N}$, each vertex can output the same random vector $X\sim \mathsf{Unif}(\{0,1\}^m)$ while the message $M$ written on the blackboard satisfies
\begin{align*}
\frac{H(M)}{H(X)} = \frac{n-k}{n-1}. 
\end{align*}
\end{theorem}

\begin{remark}
Although Theorem \ref{thm.main.upper} restricts the output $X$ to be an independent and identically distributed (i.i.d.) Bernoulli random vector, the same communication rate can also be generalized to any i.i.d. random vectors in an asymptotic manner.
{\blue This is precisely because a common randomness of rate $H(X)$ suffices to generate i.i.d. copies of a random variable $X$ with asymptotically (in the number of shared coin tosses) vanishing Kullback-Leibler divergence or total variation distance \cite{Wyner75,knuth1976complexity,Cuff13}.}  	
\end{remark}

{\blue A detailed description and proof of Theorem~\ref{thm.main.upper} are deferred to Sections~\ref{sec.achievability} and \ref{sec.generalization}.} Theorem \ref{thm.main.upper} shows that the optimal rate $(n-k)/(n-1)$ is attainable non-asymptotically when the underlying hypergraph satisfies suitable connectivity conditions, which are generalizations of the classical connectivity for $k=2$ from two different angles. We remark that a path-connected cycle-free cluster of topologically connected components differs significantly from the usual notion of path connectivity in hypergraphs, where the topological connectivity, the central concept in Theorem \ref{thm.main.upper} and a stronger notion than path connectivity, views the hypergraph as a simplicial complex in the context of algebraic topology. For example, when $k=3$ and $n=4$, the hyperedges may be viewed as surfaces of a pyramid; two surfaces suffice to make the hypergraph path-connected, while three surfaces are necessary to make it topologically connected. We leave more discussions to the related works on hypergraph theory and formal definitions in Section \ref{sec.achievability}.

The new notion of connectivity contains a rich family of hypergraphs which suggests that Theorem \ref{thm.main.upper} covers all hypergraphs for which the optimal communication rate $(n-k)/(n-1)$ is achievable. Surprisingly, there are indeed richer families of hypergraphs which do not follow the previous connectivity notion but still achieve the optimal communication rate. We discuss these examples in Section \ref{subsec.non-examples}, where we characterize the complete class of optimal hypergraphs in certain cases such as $k=2$, and $k=3$ \emph{star-shaped} hypergraphs, which are discussed in Appendix \ref{appendix.star_graphs}. It is an outstanding open problem to figure out the complete class of optimal hypergraphs. 

\subsection{Organization}
The rest of this paper is organized as follows. Section \ref{sec.achievability} gives the formal definition of topological connectivity in $k$-uniform hypergraphs and proposes the optimal communication strategy on topologically $k$-connected hypergraphs, and Section \ref{sec.generalization} generalizes the path connectivity and presents a general algorithm for Theorem \ref{thm.main.upper}. Proofs of main results are deferred to the appendices, where Appendix \ref{sec.example} also provides examples where the achievability scheme is comparatively simple, including the complete picture of \nb{$k$-uniform hypergraphs with} $k=2$. 

\subsection{Notations}
Let $\mathbb{N}$ be the set of all non-negative integers, and $\FF_2$ be the binary field. We denote by $\oplus$ the addition operator in $\FF_2$, and for $n\in \mathbb{N}$, we denote $[n]\triangleq \{1,2,\cdots,n\}$. For discrete random variables $X,Y$, let $H(X)$ be the Shannon entropy of $X$ (in bits), and $I(X;Y)$ be the mutual information between $X$ and $Y$. For a set $A$ and $k\in \mathbb{N}$, let $|A|$ be the cardinality of $A$, and $\binom{A}{k}$ be the collection of all size-$k$ subsets of $A$. Consequently, a $k$-uniform hypergraph $G=(V,E)$ is complete if $E=\binom{V}{k}$.

\section{Achievability: Topological Connectivity}\label{sec.achievability}
In this section we provide an achievability scheme for general topologically $k$-connected hypergraphs. We introduce the definition and properties of topological connectivity in Section \ref{subsec.connectivity} and the corresponding achievability strategy in Section \ref{subsec.strategy}. 

\subsection{Topological connectivity}\label{subsec.connectivity}
In Appendix \ref{subsec.k=2_general}, general achievability schemes have been proposed for all connected simple graphs when $k=2$. A natural conjecture would be that similar ideas should also work for general ``connected'' $k$-uniform hypergraphs. We will show that this conjecture is true, while we need the correct definition of connectivity for $k$-uniform hypergraphs.

In our paper, we adopt the tree definition in \cite{Kalai1983} and reinterpret it as \emph{topological connectivity}: 
\begin{definition}[Topologically $k$-connected hypergraph]\label{def.connectivity}
For any $k$-uniform hypergraph $G=(V,E)$ with $k\ge 2$, define the following \emph{generation step}: for hyperedges $e_1,\cdots,e_m\in E$ and any hyperedge $e\notin E$, if all $(k-1)$-tuples in $\binom{V}{k-1}$ appearing in $e_1,\cdots,e_m,e$ appear an even number of times, we may add the hyperedge $e$ to the hypergraph. We call $G$ is \emph{topologically $k$-connected} if $G$ becomes a complete $k$-uniform hypergraph after a finite number of generation steps. 
\end{definition}

\begin{definition}[Minimal topologically $k$-connected hypergraph]\label{def.minimal}
For $k\ge 2$, a $k$-uniform hypergraph $G$ is called \emph{minimal topologically $k$-connected} if $G$ is topologically $k$-connected and removing any hyperedge of $G$ makes it become not topologically $k$-connected. 
\end{definition}

The generation step has a natural topological interpretation. Think of embedding the $k$-uniform hypergraph $G$ into $\reals^k$, and treat hyperedges of $G$ as $(k-1)$-dimensional \emph{facets} (cf. Figure \ref{fig.connected_3hypergraph}). Note that the technical condition that all $(k-1)$-tuples appearing in $e_1,\cdots,e_m,e$ appear an even number of times essentially says that the faces $e_1,\cdots,e_m,e$ form the closed surface of a polygon. Then the generation step states that, if there is a $k$-dimensional polygon with all but one faces in the hypergraph, we are allowed to add this missing face to the hypergraph. When $k=2$, this definition coincides with the usual path-connectivity for undirected graphs, where we are allowed to add an edge $(u,v)$ to form a cycle (i.e. a $2$-dimensional polygon) if there is a path from $u$ to $v$. 

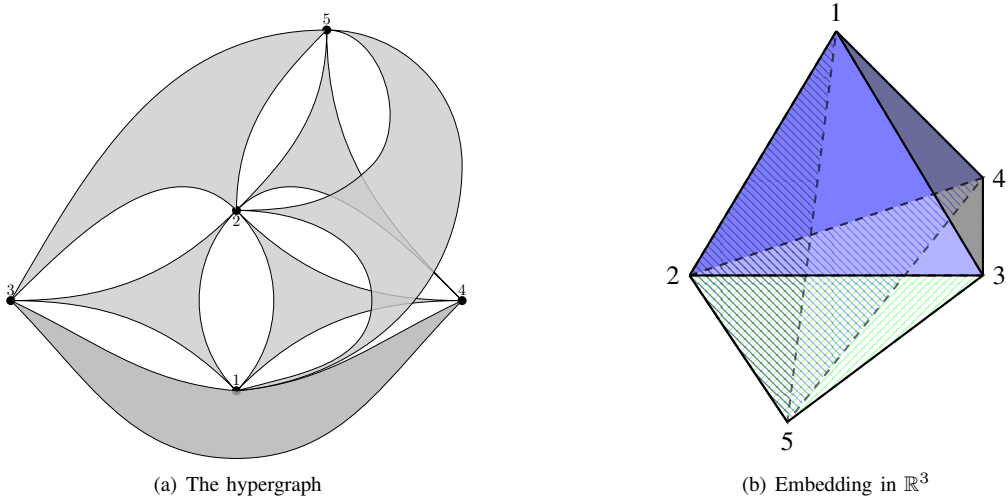
\begin{figure*}[htbp]
\centering
\subfigure[The hypergraph]{\scalebox{0.6}{
	\begin{tikzpicture}
	\node (1) at (6,8) {};
	\node (2) at (6,12) {};
	\node (3) at (1,10) {};
	\node (4) at (11,10) {};
	\node (5) at (8,16) {};
	
	\fill (1) circle (0.1) node [above] {$1$};
	\fill (2) circle (0.1) node [below] {$2$};
	\fill (3) circle (0.1) node [above] {$3$};
	\fill (4) circle (0.1) node [above] {$4$};
	\fill (5) circle (0.1) node [above] {$5$};
	
	\begin{scope}[fill opacity=0.8]
	\filldraw[fill=black!20] ($(1)$) 
	to[out=135,in=225]($(2)$)
	to[out=225,in=0]($(3)$)
	to[out=0,in=135]($(1)$)
	;
	\filldraw[fill=black!20] ($(1)$) 
	to[out=45,in=315]($(2)$)
	to[out=315,in=180]($(4)$)
	to[out=180,in=45]($(1)$)
	;
	\filldraw[fill=black!30] ($(3)$) 
	to[out=-25,in=175]($(1)$)
	to[out=5,in=205]($(4)$)
	to[out=225,in=0]($(1) + (0,-1.5)$)
	to[out=180,in=-45]($(3)$)
	;
	\filldraw[fill=black!20] ($(5)$) 
	to[out=270,in=45]($(2)$)
	to[out=45,in=135]($(4)$)
	to[out=135,in=270]($(5)$)
	;
	\filldraw[fill=black!20] ($(5)$) 
	to[out=225,in=90]($(2)$)
	to[out=135,in=45]($(3)$)
	to[out=135,in=45]($(3)$)
	to[out=60,in=180]($(5)$)
	;
	\filldraw[fill=black!20] ($(1)$) 
	to[out=5,in=270]($(4)+(0,3)$)
	to[out=90,in=0]($(5)$)
	to[out=0,in=45]($(4)+(-2,3)$)
	to[out=225,in=0]($(2)$)
	to[out=0,in=90]($(4)+(-2,0)$)
	to[out=270,in=15]($(1)$)
	;
	\end{scope}
	\end{tikzpicture}}
} \qquad \qquad\qquad 
\subfigure[Embedding in $\bR^3$]{
		\begin{tikzpicture}[thick,scale=6.5]
	\coordinate [label=above:1](A1) at (0.3,0.5);
	\coordinate [label=left:2] (A2) at (0,0);
	\coordinate  [label=right:3] (A3) at (0.6,0);
	\coordinate [label=right:4] (A4) at (0.6,0.2);
	\coordinate [label=below:5] (A5) at (0.2,-0.3);
	\coordinate (A6) at (0.334,-0.062);
	
	\begin{scope}[thick,dashed,,opacity=0.6]
	\draw [fill=blue, opacity=0.3] (A1) -- (A2) -- (A4) -- cycle;
	\draw [fill=black, opacity = 0.4](A1) -- (A3) -- (A4) -- cycle;
	\draw [fill=blue, opacity = 0.3](A1) -- (A2) -- (A3) -- cycle;
	\draw [pattern= north west lines, pattern color=blue] (A2) -- (A4) -- (A5) -- cycle;
	\draw [pattern= north west lines, pattern color=black](A1) -- (A2) -- (A5) -- cycle;
	\draw [pattern= north east lines, pattern color=green] (A2) -- (A3) -- (A5) -- cycle;
	\end{scope}
	\draw (A3) -- (A5);
	\draw (A2) -- (A5);
	\draw (A2) -- (A3);
	\draw (A1) -- (A2);
	\draw (A1) -- (A3);
	\draw (A1) -- (A4);
	\draw (A3) -- (A4);
	\end{tikzpicture}
}\caption{Example of a minimal topologically $3$-connected hypergraph on $5$ vertices with 6 hyperedges $\{ \{1,2,3\}, \{1,2,4\}, \{1,3,4\}, \{1,2,5\}, \{2,3,5\}, \{2,4,5\} \}$.} \label{fig.connected_3hypergraph}
\end{figure*}

The main property for minimally topologically $k$-connected hypergraphs is summarized in the following lemma. We remark that this property is implicitly implied by the main theorem in \cite{Kalai1983}. 

\begin{lemma}\label{thm.connectivity}
Any minimal topological $k$-connected hypergraph with $n$ vertices has exactly $\binom{n-1}{k-1}$ hyperedges. 
\end{lemma}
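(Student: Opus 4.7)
My plan is to recast the combinatorial generation step as a linear-algebraic statement over $\FF_2$, which reduces the claim to a rank computation for a certain incidence matrix. To each $k$-subset $e \subset [n]$ I associate the vector $\partial e \in \FF_2^{\binom{[n]}{k-1}}$ whose $S$-entry is $1$ iff $S \subset e$. The condition in Definition~\ref{def.connectivity} that every $(k-1)$-tuple appears an even number of times among $e_1,\ldots,e_m,e$ is then precisely $\partial e_1 + \cdots + \partial e_m + \partial e = 0$. Hence a single generation step adds $e$ exactly when $\partial e$ lies in the $\FF_2$-span of $\{\partial e_i : e_i \in E\}$. Iterating, $G$ is topologically $k$-connected iff $\mathrm{span}_{\FF_2}\{\partial e : e \in E\} = \mathrm{span}_{\FF_2}\{\partial f : f \in \binom{[n]}{k}\} =: V_k$, and $G$ is minimally topologically $k$-connected iff additionally $\{\partial e : e \in E\}$ is $\FF_2$-linearly independent. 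Therefore $|E| = \dim_{\FF_2} V_k$, independent of the particular minimal hypergraph, and the lemma reduces to showing $\dim V_k = \binom{n-1}{k-1}$.

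Next I would compute $\dim V_k$ by exhibiting the explicit basis $B := \{\partial e : e \in \binom{[n]}{k},\, n \in e\}$, which has cardinality $\binom{n-1}{k-1}$. For linear independence, restrict the associated matrix to the rows indexed by $(k-1)$-subsets $T \subset [n-1]$ (those avoiding $n$): for a column $e \ni n$, the $T$-entry equals $1$ iff $T = e \setminus \{n\}$. This restricted submatrix is a permutation matrix of order $\binom{n-1}{k-1}$, so the columns of $B$ are independent. For the spanning claim, I would apply the identity ``$\partial^2 = 0$'' to the $(k+1)$-subset $f \cup \{n\}$ for each $f \not\ni n$: the $\FF_2$-sum $\sum_{e \subset f \cup \{n\},\, |e|=k} \partial e$ vanishes since each $(k-1)$-subset $S$ is contained in either $0$ or $2$ of these $e$'s. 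Isolating $\partial f$ in this relation expresses it as an $\FF_2$-combination of $\partial e$'s with $n \in e$, so $B$ spans $V_k$.

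The main conceptual step is the translation between the combinatorial generation rule and $\FF_2$-linear dependence; once that bridge is built the rank computation is routine. An alternative, closer to \cite{Kalai1983}, is to invoke the exactness of the reduced simplicial chain complex of the full $(n-1)$-simplex over $\FF_2$ and write $\dim V_k$ as the alternating sum $\sum_{j=0}^{n-k}(-1)^j \binom{n}{k+j}$, which also equals $\binom{n-1}{k-1}$. I would prefer the explicit-basis argument, however, because the family $B$ of hyperedges containing a distinguished vertex seems likely to reappear in the achievability construction of Section~\ref{subsec.strategy} and provides a concrete structural witness beyond just counting.
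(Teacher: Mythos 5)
Your proposal is correct and takes essentially the same approach as the paper: both recast the generation step as $\FF_2$-linear dependence among rows of the incidence matrix $A_{e,t}=\mathbf{1}(t\subset e)$, conclude that a minimal topologically $k$-connected hypergraph is a row basis of the matrix $A^\star$ for the complete hypergraph, and then compute $\mathrm{rank}(A^\star)=\binom{n-1}{k-1}$ by exhibiting the star of a distinguished vertex (you use $n$, the paper uses $1$) as an explicit basis, proving independence via the columns avoiding that vertex and spanning via the relation coming from the $(k+1)$-subset $f\cup\{n\}$ (which the paper phrases as "$e$ is generated by $e_1,\ldots,e_k$"). The ``$\partial^2=0$'' phrasing and the alternating-sum/Euler-characteristic remark you mention are exactly the topological interpretation the paper gives immediately after its proof.
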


\nb{A detailed proof of Lemma~\ref{thm.connectivity} is in Appendix~\ref{lemma1}.} When $k=2$, Lemma \ref{thm.connectivity} generalizes the fact that a tree on $n$ vertices has exactly $n-1$ edges. The topological interpretation of Lemma \ref{thm.connectivity} is as follows: embed the hypergraph into $\reals^k$ and think of hyperedges as faces (as in Figure \ref{fig.connected_3hypergraph} as an example). For a minimal topologically $k$-connected hypergraph, the minimality ensures that the facets cannot be the boundary of a closed domain. As a result, these facets can be shrunk into a single point topologically, which is of Euler characteristic $1$. Moreover, for $1\le j\le k-1$, let $F_j$ be the number of $(j-1)$-dimensional edges, the topological connectivity condition ensures that $F_j=\binom{n}{j}$. Now by Euler's formula \cite{rotman1998topology}, the number $F$ of faces equals to
\begin{align*}
F &= (-1)^{k-1} \left( 1 - \sum_{j=1}^{k-1} (-1)^{j-1}F_j \right) \\
&= \sum_{j=0}^{k-1} (-1)^{k-1-j}\binom{n}{j} = \binom{n-1}{k-1},
\end{align*} 
confirming Lemma \ref{thm.connectivity}. 

\subsection{Achievability scheme}\label{subsec.strategy}
In this subsection we propose the achievability scheme for general topologically $k$-connected hypergraph $G$. \nb{Without loss of generality we assume that $G$ is minimal topologically $k$-connected, for we can always ignore the other edges and consider a minimal topologically connected subgraph.} For each $i\in [n]$, we define the induced hypergraph $G_i$ from $G$ as follows: the vertex set of $G_i$ is $V_i = [n]\backslash \{i\}$, and the edge set of $G_i$ is $E_i = \{e\backslash \{i\}: i\in e\in E \}$. Hence, the induced hypergraph $G_i$ is $(k-1)$-uniform, and $e$ is a hyperedge of $G_i$ if and only if $e\cup \{i\}\in E$. We have the following lemma. 
\begin{lemma}\label{lemma.induced_hypergraph}
For $k\ge 3$, if $G$ is topologically $k$-connected, then all induced hypergraphs $G_i$ are topologically $(k-1)$-connected. 
\end{lemma}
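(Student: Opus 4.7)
The plan is to exploit the $\FF_2$ incidence-matrix reformulation of topological $k$-connectivity that is implicit in the proof of Lemma~\ref{thm.connectivity}: for any $k$-uniform hypergraph $H$ on a vertex set $U$, the hypergraph $H$ is topologically $k$-connected if and only if the row space over $\FF_2$ of its $|E(H)|\times \binom{|U|}{k-1}$ incidence matrix agrees with the row space of the same matrix for the complete $k$-uniform hypergraph on $U$. Equivalently, for every $f\in\binom{U}{k}$ the row indicator $t\mapsto \Indc(t\subseteq f)$ is an $\FF_2$-sum of row indicators of edges of $H$. I will verify this condition for $G_i$, viewed as a $(k-1)$-uniform hypergraph on $V\setminus\{i\}$, by restricting the corresponding identity that holds for $G$.

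Fix $i\in[n]$, and pick an arbitrary $f\in\binom{V\setminus\{i\}}{k-1}$. Applying topological $k$-connectivity of $G$ to the $k$-subset $f\cup\{i\}\in\binom{V}{k}$, I obtain edges $e_1,\ldots,e_m\in E$ with
\begin{align*}
\Indc(t\subseteq f\cup\{i\}) \;=\; \bigoplus_{j=1}^{m}\Indc(t\subseteq e_j) \qquad \text{for every } t\in\binom{V}{k-1}.
\end{align*}
I then restrict both sides to those columns $t$ that contain $i$; these are in bijection with $\binom{V\setminus\{i\}}{k-2}$ via $t=\{i\}\cup s$. On the right, the terms with $e_j\not\ni i$ contribute $0$ at every such $t$ and so can be discarded; on the remaining terms, $\Indc(t\subseteq e_j)=\Indc(s\subseteq e_j\setminus\{i\})$, and on the left $\Indc(t\subseteq f\cup\{i\})=\Indc(s\subseteq f)$. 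Since $e_j\setminus\{i\}\in E_i$ whenever $e_j\ni i$, the restricted identity reads
\begin{align*}
\Indc(s\subseteq f) \;=\; \bigoplus_{j:\,e_j\ni i}\Indc\bigl(s\subseteq e_j\setminus\{i\}\bigr) \qquad \text{for every } s\in\binom{V\setminus\{i\}}{k-2},
\end{align*}
which is exactly the statement that the row of $f$ lies in the row span of $G_i$'s incidence matrix. Since $f$ was arbitrary, $G_i$ is topologically $(k-1)$-connected.

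The only non-routine ingredient is the incidence-matrix reformulation of topological connectivity itself, and this is already furnished by the proof of Lemma~\ref{thm.connectivity}; the bulk of the argument is then the bookkeeping of the column restriction and the bijection $\{i\}\cup s\leftrightarrow s$. The main care point I anticipate is keeping the three incidence matrices in play distinct---those of $G$ on $V$, of $G_i$ on $V\setminus\{i\}$, and of the complete $(k-1)$-uniform hypergraph on $V\setminus\{i\}$---and verifying that the column restriction commutes cleanly with passing to the induced subhypergraph; no additional combinatorial input beyond Lemma~\ref{thm.connectivity} should be required.
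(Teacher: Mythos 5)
Your proof is correct and takes essentially the same route as the paper's: both translate topological connectivity into the $\FF_2$ row-space condition via the incidence matrix, then observe that restricting a representation of $r_{f\cup\{i\}}$ to columns $t\ni i$ kills all rows for edges not containing $i$ and yields a representation of $r_f$ in the row space of $G_i$'s incidence matrix. You merely spell out the column-restriction bookkeeping (the bijection $t=\{i\}\cup s$ and the vanishing of rows with $e_j\not\ni i$) that the paper leaves to the block picture in Figure~\ref{fig.A}.
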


\nb{A detailed proof of Lemma~\ref{lemma.induced_hypergraph} is in Appendix~\ref{lemma2}.} We propose the following communication strategy for topologically $k$-connected hypergraphs. For each edge $e\in E$, we define an independent random variable $R_e\sim\mathsf{Unif}(\{0,1\})$ by tossing the associated common coin.
\begin{definition}[Communication strategy for $k$-connected hypergraphs]\label{def.strategy}
For a minimal topologically $k$-connected hypergraph $G$ with $k\ge 3$, the communication strategy is as follows: for each $i\in [n]$, 
\begin{enumerate}
	\item Player $i$ constructs the induced hypergraph $G_i$, and choose an arbitrary minimal topologically $(k-1)$-connected subgraph $G_i^\star\subseteq G_i$ (existence of $G_i^\star$ is ensured by Lemma \ref{lemma.induced_hypergraph}); 
	\item For each hyperedge $e$ of $G_i$ which is not in $G_i^\star$, let $e$ be generated by $e_1,\cdots,e_m$ in $G_i^\star$ (cf. Definition \ref{def.connectivity}). Player $i$ then writes $R_{e\cup \{i\}} \oplus R_{e_1\cup \{i\}}\oplus \cdots \oplus R_{e_m\cup \{i\}}$ on the blackboard. 
\end{enumerate}
\end{definition}

Although the previous scheme is defined for $k\ge 3$, it is straightforward to see that it reduces exactly to the achievability scheme in Appendix \ref{subsec.k=2_general} when $k=2$ (by adapting the definition of topologically $1$-connected graph appropriately). Moreover, this strategy can be implemented under the simultaneous message passing model. We refer to Figure \ref{fig.strategy_example} for an example. 

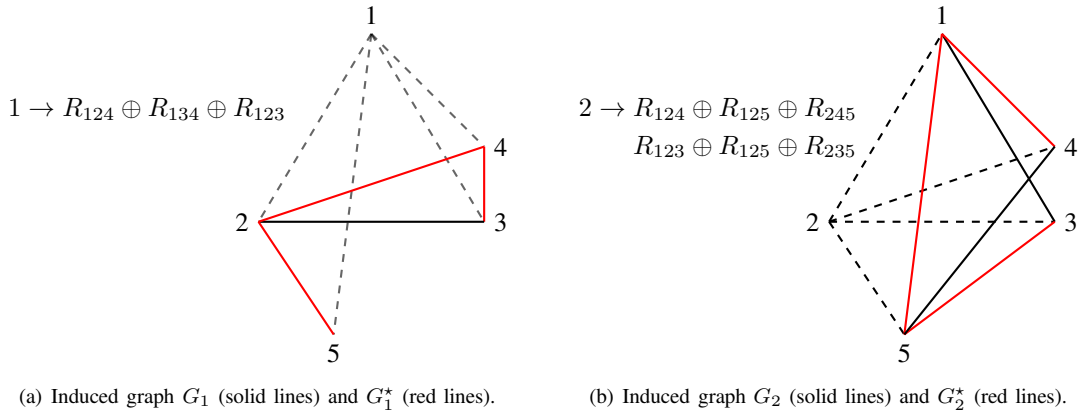
\begin{figure}[h]
	\subfigure[Induced graph $G_1$ (solid lines) and $G_1^\star$ (red lines).]{
		\centering
		\begin{tikzpicture}[thick,scale=5]
		\coordinate [label=above:1](A1) at (0.3,0.5);
		\coordinate [label=left:2] (A2) at (0,0);
		\coordinate  [label=right:3] (A3) at (0.6,0);
		\coordinate [label=right:4] (A4) at (0.6,0.2);
		\coordinate [label=below:5] (A5) at (0.2,-0.3);
		\coordinate [label=left: $1 \rightarrow R_{124}\oplus R_{134}\oplus R_{123}$] (A6) at (0.1,0.3);
		
		\begin{scope}[thick,dashed,,opacity=0.6]
		\draw  (A1) -- (A2);
		\draw  (A1) -- (A3);
		\draw  (A1) -- (A4);
		\draw  (A1) -- (A5);
		\end{scope}
		\draw [red] (A3) -- (A4);
		\draw (A2) -- (A3);
		\draw [red] (A2) -- (A5);
		\draw [red] (A2) -- (A4);
		
		\end{tikzpicture}
	}
	\quad 
	\subfigure[Induced graph $G_2$ (solid lines) and $G_2^\star$ (red lines).]{
		\centering
		\begin{tikzpicture}[thick,scale=5]
		\coordinate [label=above:1](A1) at (0.3,1.5);
		\coordinate [label=left:2] (A2) at (0,1);
		\coordinate  [label=right:3] (A3) at (0.6,1);
		\coordinate [label=right:4] (A4) at (0.6,1.2);
		\coordinate [label=below:5] (A5) at (0.2,0.7);
		\coordinate [label=left: $2 \rightarrow R_{124}\oplus R_{125}\oplus R_{245}$] (A6) at (0.1,1.3);
		\coordinate [label=left: $ R_{123}\oplus R_{125}\oplus R_{235}$] (A7) at (0.1,1.2);
		
		\begin{scope}[thick,dashed]
		\draw  (A2) -- (A1);
		\draw  (A2) -- (A3);
		\draw  (A2) -- (A4);
		\draw  (A2) -- (A5);
		\end{scope}
		\draw (A1) -- (A3);
		\draw [red] (A1) -- (A5);
		\draw [red] (A1) -- (A4);
		\draw [red] (A3) -- (A5);
		\draw  (A4) -- (A5);
		
		\end{tikzpicture}
	}
	\caption{The communication strategy on the minimally topologically connected 3-uniform hypergraph in Figure \ref{fig.connected_3hypergraph}, which achieves the optimal communication rate $1/2$. }\label{fig.strategy_example}
\end{figure}

Assuming for a moment that every player may decode the random vector $X=(R_e: e\in E)$, we show that the communication rate of this strategy is optimal. Firstly, by Lemma \ref{thm.connectivity} and the minimality of $G$, 
$
H(X) = |E| = \binom{n-1}{k-1}. 
$
Moreover, the number of bits player $i$ writes on the blackboard is
$
|M_i| = |\{e\in E: i\in e\}| - \binom{n-2}{k-2},
$
where Lemma \ref{thm.connectivity} again shows that each $G_i^\star$ has $\binom{n-2}{k-2}$ hyperedges. As a result, the total length of the message $M$ is
\begin{align*}
|M| &= \sum_{i=1}^n |M_i| = \sum_{i=1}^n \left(|\{e\in E: i\in e\}| - \binom{n-2}{k-2}\right)\\
& = k|E| - n\binom{n-2}{k-2} = \binom{n-2}{k-1}. 
\end{align*}
Hence, the communication rate can be upper bounded as
\begin{align*}
\frac{H(M)}{H(X)} \le \frac{|M|}{H(X)} = \frac{\binom{n-2}{k-1}}{\binom{n-1}{k-1}} = \frac{n-k}{n-1}, 
\end{align*}
which is optimal by Corollary \ref{cor.lower}. Therefore it remains to prove the following theorem. 
\begin{theorem}\label{thm.decodable}
	Let $G=(V,E)$ be a topologically $k$-connected hypergraph. Then under the communication strategy in Definition \ref{def.strategy}, every player may decode the random vector $X$. 
\end{theorem}

The proof of Theorem \ref{thm.decodable} requires delicate algebraic and combinatorial arguments for topological connectivity, which is deferred to Appendix \ref{subsec.correctness}. 

\section{Generalization: Clusters of Connected Components}\label{sec.generalization}
In this section, we generalize the achievability scheme in Section \ref{sec.achievability} to incorporate the cases where the hypergraph is not topologically connected but consists of topologically connected components. 

\subsection{Path connectivity}
First we review the notion of path connectivity in general (and not necessarily uniform) hypergraphs. Recall that a general hypergraph $G=(V,E)$ consists of a finite vertex set $V$ and a finite hyperedge set $E=\{A_1,\cdots,A_m\}$, where $A_i\subseteq V$ are non-empty subsets of $V$. Path connectivity in hypergraphs is defined as follows. 
\begin{definition}[Path and path connectivity]\label{def.path_connectivity}
	In a hypergraph $G=(V,E)$ and any vertices $u,v\in V$, a \emph{simple path} from $u$ to $v$ is a sequence of distinct vertices $v_0,v_1,\cdots,v_k\in V$ and distinct hyperedges $A_1,\cdots,A_k\in E$ such that $v_0=u, v_k=v$, and $v_{i-1},v_i\in A_i$ for any $i\in [k]$. The hypergraph $G$ is \emph{path-connected} iff for any $u,v\in V$, there is a simple path from $u$ to $v$.  
\end{definition}

We also need the notion of cycle-free hypergraphs as follows. 
\begin{definition}[Simple cycle and cycle-free hypergraph]\label{def.cycle_free}
	In a hypergraph $G=(V,E)$, a \emph{simple cycle} is a sequence of distinct vertices $v_0,v_1,\cdots,v_{k-1}\in V$ and distinct hyperedges $A_1,\cdots,A_k\in E$ such that $v_{i-1}, v_i\in A_i$ for any $i\in [k]$, where $v_k = v_0$. The hypergraph $G$ is \emph{cycle-free} iff there is no simple cycle in $G$. 
\end{definition}

Note that a path-connected cycle-free $2$-uniform hypergraph is a tree. The next lemma is another generalization of the fact that a tree on $n$ vertices has exactly $n-1$ edges. Recall that for each $v\in V$, the degree of $v$ is defined as $\deg(v) = |\{A \in E: v\in A  \}|$.
\begin{lemma}\label{lemma.path_connectivity_edge}
	Let $G=(V,E)$ be a path-connected cycle-free hypergraph. Then
$
	\sum_{A \in E} \left(|A| - 1 \right) = |V| - 1,
$ and $
\sum_{v\in V} (\deg(v) - 1) = |E| - 1.
$
\end{lemma}
\nb{A detailed proof of Lemma~\ref{lemma.path_connectivity_edge} is in Appendix~\ref{lemma3}.}
\subsection{Achievability scheme}
In this section we formally define the cluster of connected components, and present a communication strategy achieving the upper bound in Theorem \ref{thm.main.upper} under the simultaneous message passing procotol. 
\begin{definition}\label{def.cluster}
	Let $G=(V,E)$ be a $k$-uniform hypergraph. We call $G$ is a \emph{cluster of connected components} if and only if there is another hypergraph (not necessarily $k$-uniform) $G_c = (V, \{A_1,\cdots,A_m\})$ such that (where the subscript $c$ stands for ``cluster''): 
	\begin{enumerate}
		\item the hypergraph $G_c$ is path-connected and cycle-free; 
		\item for each $i\in [m]$, the restriction of $G$ on the vertices in $A_i$ is topologically $k$-connected. 
	\end{enumerate}
\end{definition}
\begin{figure}[h]
	\centering
	\begin{tikzpicture}[thick,scale=5]
	\coordinate [label=above:1](A1) at (0.3,0.5);
	\coordinate [label=left:4] (A2) at (0.2,0);
	\coordinate  [label=right:5] (A3) at (0.6,0);
	\coordinate [label=right:6] (A4) at (0.6,0.2);
	\coordinate [label=below:2] (A5) at (-0.1,0.1);
	\coordinate [label=left:3] (A6) at (-0.2,0.25);
	
	\begin{scope}[thick,dashed,,opacity=0.6]
	\draw [fill=blue, opacity=0.3] (A1) -- (A2) -- (A4) -- cycle;
	\draw [fill=blue, opacity = 0.5](A2) -- (A3) -- (A4) -- cycle;
	\draw [fill=blue, opacity = 0.3](A1) -- (A2) -- (A3) -- cycle;
	\draw [fill=green, opacity = 0.3](A1) -- (A5) -- (A6) -- cycle;
	
	\end{scope}
	\draw (A1) -- (A6);
	\draw (A1) -- (A5);
	\draw (A6) -- (A5);
	\draw (A2) -- (A3);
	\draw (A1) -- (A2);
	\draw (A2) -- (A4);
	\end{tikzpicture}
	\caption{An example of a cluster of connected components.} \label{fig.cluster}
\end{figure}
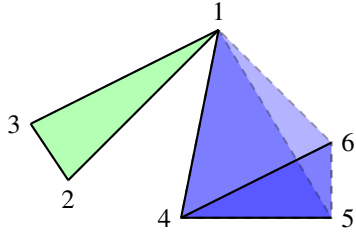

Definition \ref{def.cluster} essentially says that to form a cluster, the topologically $k$-connected components of $G$ should be path-connected without cycles in terms of components. Figure \ref{fig.cluster} illustrates an example of such a cluster, where 
\begin{align*}
G&=([6], \{ \{1,2,3\}, \{1,4,5\}, \{1,4,6\}, \{4,5,6\} \}), \\
G_c &= ([6], \{ \{1,2,3\}, \{1,4,5,6\} \}). 
\end{align*}

Next we define the communication strategy for clusters of connected components. 
\begin{definition}[Communication strategy for clusters of connected components]\label{def.strategy_cluster}
	Let the $k$-uniform hypergraph $G=(V,E)$ be a cluster of connected components, with the corresponding cluster hypergraph $G_c=(V,\{A_1,\cdots,A_m\})$. The communication strategy is as follows: 
	\begin{enumerate}
		\item For each $i\in [m]$, remove hyperedges properly so that the restriction of $G$ on $A_i$ is minimally topologically $k$-connected; 
		\item Messages within components: for each $i\in [m]$, repeat (for different realizations of coin tosses) the strategy in Definition \ref{def.strategy} for $M_i$ times in the restricted graph on $A_i$, where $M_i$ is chosen so that
		\begin{align}\label{eq.choice_Mi}
		M_i \cdot \binom{|A_i| - 2}{k-2} = C
		\end{align}
		for some common constant $C>0$. We choose $C$ large enough so that each $M_i$ is an integer; 
		\item Messages across components: for each $v\in V$ belonging to at least two connected components $A_{i_1},\cdots,A_{i_\ell}$ (i.e., $\ell=\deg_{G_c}(v)\ge 2$) and $j\in [\ell]$, let $G_j^\star$ be the minimal topologically $(k-1)$-connected subgraph of $v$-induced hypergraph in the connected component $A_{i_j}$ (cf. Definition \ref{def.strategy}) used in the previous step. Let $R_j \in \FF_2^C$ be the binary vector consisting of the outcomes of coin tosses corresponding to every hyperedge in $G_j^\star$ repeated $M_{i_j}$ times\footnote{Note that $G_j^\star$ has exactly $\binom{|A_{i_j}-2|}{k-2}$ hyperedges by Lemma \ref{thm.connectivity}, the choice of $M_{i_j}$ in \eqref{eq.choice_Mi} ensures that the dimension of the vector $R_j$ is exactly $C$.}, in an arbitrary order. Then the vertex $v$ writes
		\begin{align*}
		M_v = (R_1 \oplus R_2, R_1\oplus R_3, \cdots, R_1\oplus R_\ell)
		\end{align*} 
		on the blackboard. 
	\end{enumerate}
\end{definition}

The intuition behind the strategy in Definition \ref{def.strategy_cluster} is as follows. Firstly, each connected component employs the strategy in Definition \ref{def.strategy} so that each vertex in this component may decode all coin tossing outcomes within that component. Secondly, for vertices which link multiple connected components, they employ the strategy in Appendix~\ref{subsec.k=2_general} to share coin tossing outcomes from different components. Finally, since different connected components may be of different sizes, proper repetitions are necessary to ensure that all components have the same amount of information to be shared across components. 

For example, for the previous hypergraph in Figure \ref{fig.cluster}, we have $|A_1|=3, |A_2|=4$. Consequently, we may choose $M_1= 2, M_2=1$ and $C=2$. Let $R_{123}, R_{123}'$ be independent outcomes of the common coin shared among $\{1,2,3\}$ (i.e., toss coin twice), then the message within components (broadcast by player $4$) is
$
R_{145} \oplus R_{146} \oplus R_{456}, 
$
and the messages across components (broadcast by player $1$) are
$
R_{123} \oplus R_{145}, R_{123}' \oplus R_{146}. 
$
It is straightforward to see that each player may decode the random vector $(R_{123}, R_{123}', R_{145}, R_{146}, R_{456})$, and thus the previous strategy achieves the optimal communication rate $3/5$ in this example. 

The following theorem states that for general clusters of connected components, the strategy in Definition \ref{def.strategy_cluster} achieves the optimal communication rate. Let $X$ be the binary vector consisting of all coin tossing outcomes during the strategy in Definition \ref{def.strategy_cluster}.
\begin{theorem}\label{thm.cluster}
	For any $k$-uniform hypergraph $G=(V,E)$ which is a path-connected cycle-free cluster of topologically connected components (cf. Definition \ref{def.cluster}), every player may decode the entire outcome vector $X$ under the strategy in Definition \ref{def.strategy_cluster}, with communication rate $H(M)/H(X) = (n-k)/(n-1)$.
\end{theorem}
\nb{A detailed proof of Theorem~\ref{thm.cluster} is in Appendix~\ref{theorem4}.}
\subsection{Further discussions on star graphs}\label{subsec.non-examples}
Motivated by Theorem \ref{thm.cluster}, a natural question arises on whether any $k$-uniform hypergraph which is possible to achieve the optimal communication rate $(n-k)/(n-1)$ must contain a path-connected cycle-free cluster of topologically connected components. For $k=2$, examples in Appendix \ref{sec.example} show that the answer is affirmative. However, in this section we show that even for $k=3$ a richer class of hypergraphs achieves the optimal communication rate. Also, for the special case of \nb{star graphs (a hypergraph with a single vertex contained in all hyperedges)}, we characterize a necessary and sufficient condition for any $3$-uniform star graph to achieve the optimal communication rate. Hence, it is an outstanding open problem to characterize the entire class of communication-optimal hypergraphs.

We first construct an example of a $3$-uniform hypergraph not satisfying the assumption of Theorem \ref{thm.cluster} but achieves the optimal communication rate of $(n-k)/(n-1)$. The graph $G$ is shown in Figure \ref{fig.non-examples-1}, with
$$G = ([6], \{ \{1,2,3\}, \{1,3,4\}, \{1,4,5\}, \{1,5,6\}, \{1,2,6\} \}).$$
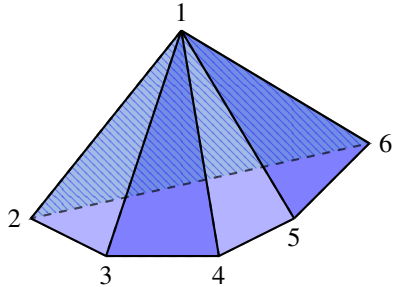
\begin{figure}[h]
\centering{
\begin{tikzpicture}[thick,scale=5]
		\coordinate [label=above:1](A1) at (0.3,0.6);
		\coordinate [label=left:2] (A2) at (-0.1,0.1);
		\coordinate  [label=below:3] (A3) at (0.1,0);
		\coordinate [label=below:4] (A4) at (0.4,0);
		\coordinate [label=below:5] (A5) at (0.6,0.1);
		\coordinate [label=right:6] (A6) at (0.8,0.3);
		
		\begin{scope}[thick,dashed,,opacity=0.6]
		\draw [fill=blue, opacity=0.3] (A1) -- (A2) -- (A3) -- cycle;
		\draw [fill=blue, opacity = 0.5](A1) -- (A3) -- (A4) -- cycle;
		\draw [fill=blue, opacity = 0.3](A1) -- (A4) -- (A5) -- cycle;
		\draw [fill=blue, opacity = 0.5](A1) -- (A5) -- (A6) -- cycle;
		\draw [fill=green, opacity = 0.1](A1) -- (A2) -- (A6) -- cycle;
\draw [pattern= north west lines, pattern color=blue] (A1) -- (A2) -- (A6) -- cycle;		
		\end{scope}
		\draw (A1) -- (A2);
		\draw (A1) -- (A3);
		\draw (A1) -- (A4);
		\draw (A1) -- (A5);
		\draw (A1) -- (A6);
		\draw (A2) -- (A3);
		\draw (A3) -- (A4);
		\draw (A4) -- (A5);
		\draw (A5) -- (A6);
		
\end{tikzpicture}}
\caption{An example hypergraph not satisfying the condition of Theorem \ref{thm.cluster}.}\label{fig.non-examples-1}
\end{figure}

It is not hard to show that $G$ is not a path-connected cycle-free cluster of topologically connected components, as the only topologically connected components are the single triangles and the resulting hypergraph $G_c$ will not be cycle-free. Hence, $G$ does not satisfy the condition of Theorem \ref{thm.cluster}. However, the optimal communication rate $3/5$ can be achieved for $G$, where a feasible strategy is that player $1$ writes the following message $M$ on the blackboard:
$$ M = (R_{123} \oplus R_{145}, R_{134}\oplus R_{156}, R_{123}\oplus R_{134} \oplus R_{126}).$$
One can easily verify that given the $3$-bit message $M$, each player is able to decode the entire $5$-bit randomness. 

The above example is a special case of a $3$-uniform \emph{star graph}, i.e. a $3$-uniform hypergraph where every edge contains a common vertex $v^\star$. In fact, the above strategy can be generalized for general star graphs, and the following theorem completely characterizes the family of $3$-uniform star graphs where the optimal communication rate $(n-3)/(n-1)$ is achievable. 

\begin{theorem}\label{thm.star-graph}
Let $G$ be a $3$-uniform star graph with $n$ vertices and the central vertex $v^\star$, and $G_{v^\star}$ be the induced graph (which is a classical graph) at vertex $v^\star$ as per Section \ref{subsec.strategy}. Then the optimal communication rate $(n-3)/(n-1)$ can be achieved for $G$ if and only if $G_{v^\star}$ contains a vertex-disjoint union of simple edges or Hamilton cycles of odd length including all vertices. 
\end{theorem}

For example, the induced graph $G_1$ for the hypergraph $G$ in Figure \ref{fig.non-examples-1} is a Hamilton cycle on all vertices $\{2,3,\cdots,6\}$, and therefore satisfies the condition of Theorem \ref{thm.star-graph}. The \emph{if} part of Theorem \ref{thm.star-graph} is shown by providing an explicit communication strategy in same spirits to the above example, and the \emph{only if} part is more challenging and requires the theory of fractional graphs. The complete proof is presented in Appendix \ref{appendix.star_graphs}.

\appendices
\section{Simple Examples}\label{sec.example}
In this section we provide some examples where the hypergraph $G=(V,E)$ is rather simple, and propose the corresponding achievability schemes. 
\subsection{Star graph with $k=2$}\label{subsec.k=2_easy}
In the star graph case with $k=2$, there are $n\ge 3$ players where the last player shares a common fair coin with any other player (i.e., the associated graph $G$ is a star graph with center vertex $n$). First consider $n=3$, and let $R_i, i\in \{1,2\}$ be the outcome (head or tail) of the first toss of the common coin shared between player $i$ and $3$. Clearly $R_1$ and $R_2$ are independent $\mathsf{Unif}(\{0,1\})$ random variables, and we consider the strategy that player $3$ writes $M=R_1\oplus R_2$ on the blackboard (cf. Figure \ref{fig.star_graph_n=3}). Since $R_2=R_1\oplus M$ and $R_1 = R_2\oplus M$, all players may know $R_1, R_2$ perfectly and generate $X=(R_1, R_2)$. Note that
\begin{align*}
H(X) = 2, \qquad H(M) = 1,
\end{align*}
we have achieved the optimal communication rate $\frac{1}{2}$, confirming Theorem \ref{thm.main.upper}. 

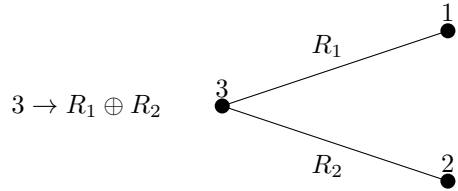
\begin{figure}[h]
	\centering
	\scalebox{0.77}{
		\begin{tikzpicture}
		\node (1) at (4,11) {};
		\node (2) at (4,9) {};
		\node (3) at (1,10) {};
		
		\node at (2.4,10.8) {$R_1$};
		\node at (2.4,9.2) {$R_2$};
		\node at (-0.8,10) {$3 \rightarrow R_1\oplus R_2$};
		
		\fill (1) circle (0.1) node [above] {$1$};
		\fill (2) circle (0.1) node [above] {$2$};
		\fill (3) circle (0.1) node [above] {$3$};
		
		\begin{scope}
		\draw($(1)$) to ($(3)$);
		\draw($(2)$) to ($(3)$);
		\end{scope}
		\end{tikzpicture}
	}
	\caption{Communication strategy for star graph with $n=3$, $k=2$.}\label{fig.star_graph_n=3}
\end{figure}

The achievability scheme for $n\ge 3$ is similar. Let $R_i, 1\le i\le n-1$ be independent $\mathsf{Unif}(\{0,1\})$ random variables shared between player $i$ and $n$, consider the case where the last player broadcasts the following message on the blackboard: 
\begin{align*}
M = (R_1\oplus R_2, R_1\oplus R_3, \cdots, R_1\oplus R_{n-1}).
\end{align*}
Based on the message $M$, player $1$ may decode any other $R_i$ using the knowledge of $R_1$. For any player $j\in \{2,\cdots,n-1\}$, knowing both $R_1\oplus R_j$ from $M$ and $R_j$, player $j$ can decode $R_1$ and further all $R_i$ based on $M$. Hence, in this case all player may generate $X=(R_1,\cdots,R_{n-1})$, with
\begin{align*}
H(X) = n - 1, \qquad H(M) = n-2, 
\end{align*}
achieving the optimal communication rate $\frac{n-2}{n-1}$. 

\subsection{General connected graph with $k=2$}\label{subsec.k=2_general}
We may generalize the strategy in Appendix~\ref{subsec.k=2_easy} to the case where $k=2$ and the graph $G$ is connected. For each edge $e\in E$, we may associate an independent random variable $R_e\sim\mathsf{Unif}(\{0,1\})$ by tossing the associated common coin. Since $G$ is connected, it contains a spanning tree $T\subseteq G$. Now consider the following strategy: for each player $i\in [n]$, 
\begin{enumerate}
	\item if the degree of $i$ in $T$ is 1, player $i$ writes nothing on the blackboard (i.e., $M_i=\emptyset$); 
	\item if the degree of $i$ in $T$ is at least 2, let $e_1,\cdots,e_{m_i}$ be all of its neighboring edges in an arbitrary order, with $m_i=\deg_T(i)$. player $i$ then writes $M_i=(R_{e_1}\oplus R_{e_2}, R_{e_1}\oplus R_{e_3}, \cdots, R_{e_1}\oplus R_{e_{m_i}} )$ on the blackboard. 
\end{enumerate}

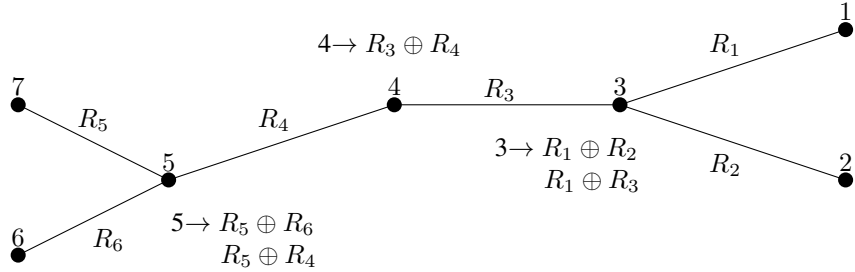
\begin{figure}[h]
	\centering
	\scalebox{0.77}{
		\begin{tikzpicture}
		\node (1) at (4,11) {};
		\node (2) at (4,9) {};
		\node (3) at (1,10) {};
		\node (4) at (-2,10) {};
		\node (5) at (-5,9) {};
		\node (6) at (-7,8) {};
		\node (7) at (-7,10) {};

		\node at (2.4,10.8) {$R_1$};
		\node at (2.4,9.2) {$R_2$};
		\node at (-0.6,10.2) {$R_3$};
		\node at (-3.6,9.8) {$R_4$};
		\node at (-6,9.8) {$R_5$};
		\node at (-5.8,8.2) {$R_6$};
		
		\node[align=left] at (0.3,9.2) {3$\rightarrow R_1\oplus R_2$ \\ \phantom{3$\rightarrow $ }$R_1\oplus R_3$ };
		\node[align=left] at (-2,10.8) {4$\rightarrow R_3\oplus R_4$ };
		\node[align=left] at (-4,8.2) {5$\rightarrow R_5\oplus R_6$ \\ \phantom{5$\rightarrow $ }$R_5\oplus R_4$ };

		\fill (1) circle (0.1) node [above] {$1$};
		\fill (2) circle (0.1) node [above] {$2$};
		\fill (3) circle (0.1) node [above] {$3$};
		\fill (4) circle (0.1) node [above] {$4$};
		\fill (5) circle (0.1) node [above] {$5$};
		\fill (6) circle (0.1) node [above] {$6$};
		\fill (7) circle (0.1) node [above] {$7$};
		
		\begin{scope}
		\draw($(1)$) to ($(3)$);
		\draw($(2)$) to ($(3)$);
		\draw($(3)$) to ($(4)$);
		\draw($(4)$) to ($(5)$);
		\draw($(5)$) to ($(6)$);
		\draw($(5)$) to ($(7)$);
		
		\end{scope}
		\end{tikzpicture}
	}
	\caption{Communication strategy for a tree with $n=7$, $k=2$.}\label{fig.connected_k=2}
\end{figure}

An example of this strategy is illustrated in Figure \ref{fig.connected_k=2}. The next lemma shows that every player may generate the random vector $X=(R_e: e\in E_T)$, where $E_T$ is the edge set of the spanning tree $T$. 
\begin{lemma}\label{lemma.k=2_connected}
	Based on the message $M=(M_1,\cdots,M_n)$, every player can decode $X=(R_e: e\in E_T)$. 
\end{lemma}
\begin{proof}
	By symmetry, it suffices to prove that the first player can decode $X$. We prove the following statement: for any edge $(i,j)\in E_T$, if player $1$ can decode $(R_e: i\in e\in E_T)$, then he can also decode $(R_e: j\in e\in E_T)$. The proof of this statement exactly follows from the arguments in Appendix~\ref{subsec.k=2_easy} based on the star graph centered at $i$ and the message $M_i$. Now since $T$ is connected, we may start from $i=1$ in the previous statement and visit all vertices of $T$, completing the proof. 
\end{proof}

Next we evaluate $H(X)$ and $H(M)$. Clearly 
\begin{align*}
H(X) &= |E_T| = n-1, \\
H(M) &\le |M| = \sum_{i=1}^n (\deg_T(i) - 1) = 2|E_T| - n = n-2. 
\end{align*}
As a result, $H(M)\le \frac{n-2}{n-1}H(X)$, proving Theorem \ref{thm.main.upper} for the case $k=2$. 

\subsection{Forehead model with $k=n-1$}\label{subsec.forehead}
In the forehead model, we have $k=n-1$, and $G$ is a complete $k$-uniform hypergraph. As usual, for each $i\in [n]$, we associate an independent random variable $R_{\backslash i}\sim \mathsf{Unif}(\{0,1\})$ via coin tossing, and player $i$ knows all random variables except $R_{\backslash i}$. This is where the name \emph{forehead model} comes from: the random variable $R_{\backslash i}$ is written on the forehead of player $i$ which he cannot see \cite{Chandra:1983}. The communication strategy for this model is as follows: player $1$ writes 
\begin{align*}
M = R_{\backslash 2} \oplus R_{\backslash 3} \oplus \cdots \oplus R_{\backslash n}
\end{align*}
on the blackboard, and other players write nothing. It is clear that everyone then may know and generate $X=(R_{\backslash 2}, R_{\backslash 3}, \cdots, R_{\backslash n})$, with
\begin{align*}
H(X) = n-1, \qquad H(M) = 1. 
\end{align*}
Hence, this strategy provides an achievability scheme of $H(M)=\frac{1}{n-1}H(X)$ in the forehead model, conforming to Theorem \ref{thm.main.upper}. 

\section{Asymptotically Optimal Communication Rates}\label{sec.lower}
This section is devoted to the asymptotically optimal communication rates for common randomness generation. Specifically, we first prove the lower bounds in Theorem \ref{thm.main.lower} and Corollary \ref{cor.lower}, and then show that the rate given by the linear programming is attainable asymptotically. 

\subsection{Proof of Theorem \ref{thm.main.lower}}\label{theorem1}
We start with some notations. Recall that $X$ is the outputted common randomness, and $M$ is the message written on the blackboard. Fix any complete order relationship $(E,<)$ on the edge set $E$, and for $e\in E$, let $R_e$ be the randomness associated with edge $e$, and $R_{<e}$ be the set of randomness associated with edges preceding $e$ under the order $(E,<)$. Furthermore, for any $U\subseteq V$ we denote by $R_U$ the set of randomness known to the player set $U$.  

By scaling, it suffices to find non-negative parameters $(r_v)_{v\in V}, (s_e)_{e\in E}$ such that the following inequalities hold: 
\begin{align}
\sum_{v\in U} r_v &\ge \sum_{e\in E: e\subseteq U} s_e, \qquad \forall U\subsetneq V \label{eq.cut_condition} \\
\sum_{v\in V} r_v &\le H(M), \label{eq.H_M} \\
\sum_{e\in E} s_e &\ge H(X). \label{eq.H_X}
\end{align}
Intuitively, the quantity $r_v$ denotes the length of the messages sent by player $v$, and $s_e$ denotes the number of bits in $R_e$ used to generate the common output $X$. To specify the choices, recall that a blackboard communication protocol can be treated as an infinite-round sequential communication, and we write $M = (M_1,M_2,\cdots)$ where $M_t$ is outputted by the player $t\bmod n$ and may be an empty string. Now we set
\begin{align*}
r_v &= \sum_{t=0}^\infty H(M_{tn + v} | M^{tn+v-1}), \qquad \forall v\in V = [n], \\
s_e &= I(X; R_e | R_{<e}), \qquad \forall e\in E. 
\end{align*}

We verify the inequalities \eqref{eq.cut_condition}--\eqref{eq.H_X}. To establish \eqref{eq.cut_condition}, note that
\begin{align*}
\sum_{v\in U} r_v &=\sum_{t=0}^\infty  \sum_{v\in U} H(M_{tn + v} | M^{tn+v-1}) \\
&\ge \sum_{t=0}^\infty  \sum_{v\in U} H(M_{tn + v} | M^{tn+v-1}, R_{U^c}) \\
&\stepa{=} \sum_{t=0}^\infty  \sum_{v\in V} H(M_{tn + v} | M^{tn+v-1}, R_{U^c}) \\
&\stepb{=} H(M|R_{U^c}) \\
&\stepc{\ge } H(X|R_{U^c}) \\
&\stepd{=} H(X|R_{U^c}) - H(X|R_{U^c}, (R_e)_{e\subseteq U} ) \\
&= I(X; (R_e)_{e\subseteq U} | R_{U^c}) \\
&= \sum_{e\in E: e\subseteq U} I(X;R_e | R_{U^c}, (R_{e'})_{e'\in U, e'<e}) \\
&\stepe{\ge} \sum_{e\in E: e\subseteq U} I(X;R_e | R_{<e}) \\
&= \sum_{e\in E: e\subseteq U} s_e,
\end{align*}
where (a) follows from the fact that under the blackboard communication protocol $M_{tn+v}$ must be a function of $(M^{tn+v-1}, R_{U^c})$ whenever $v\in U^c$, (b) is due to the chain rule of the Shannon entropy, (c) is due to that $X$ is a function of $(M,R_{U^c})$ since each player $v\in U^c$ can output $X$ based on the message $M$ and her known randomness, (d) is due to that the output $X$ is a function of all randomness $(R_e)_{e\in E}$, and (e) follows from the inequality $I(A;B|C,D)\ge I(A;B|C)$ whenever $B$ and $D$ are conditionally independent given $C$. Therefore \eqref{eq.cut_condition} holds. The inequality \eqref{eq.H_M} holds with equality due to the chain rule of the Shannon entropy. For inequality \eqref{eq.H_X}, the chain rule gives
\begin{align*}
\sum_{e\in E} s_e = I(X; (R_e)_{e\in E}) = H(X)
\end{align*}
since the output $X$ is a function of $(R_e)_{e\in E}$. 

\nb{\subsection{Proof of Corollary \ref{cor.upper}}\label{corollary1proof}
We first show that $t(G)\le 1$ for all hypergraphs. Assigning non-negative weights $(s_e)_{e\in E}$ in an arbitrary way with $\sum_{e\in E} s_e = 1$, consider the following feasible solution $(r_v)_{v\in V}$: 
\begin{align*}
	r_v = \sum_{e\in E: v\in e} \frac{s_e}{|e|}, 
\end{align*}
where $|e|$ denotes the number of vertices in the hyperedge $e$. It is then clear that for all $U\subseteq V$, 
\begin{align*}
	\sum_{v\in U} r_v &= \sum_{v\in U} \sum_{e\in E: v\in e} \frac{s_e}{|e|} \\
	&= \sum_{e\in E} s_e\cdot \frac{\#\text{ of vertices }v\text{ in }U\text{ with }v\in e}{|e|} \\
	&\ge \sum_{e\in E: e\subseteq U} s_e,
\end{align*}
showing that $(r_v)_{v\in V}$ is indeed a feasible solution. Consequently, 
\begin{align*}
	t(G) \le \sum_{v\in V} r_v = \sum_{v\in V} \sum_{e\in E: v\in e} \frac{s_e}{|e|} = \sum_{e\in E} s_e = 1. 
\end{align*}

Next we prove that $t(G)=1$ if and only if $G$ is disconnected. For the \emph{if} part, for disconnected $G$, we may split the vertex set $V$ into two non-empty sets $U$ and $V\backslash U$, such that for every hyperedge $e$, either $e\subseteq U$ or $e\subseteq V\backslash U$. Consequently, for any feasible solution $(r_v)_{v\in V}$ and $(s_e)_{e\in E}$, 
\begin{align*}
	\sum_{v\in V} r_v &= \sum_{v\in U} r_v + \sum_{v\in V\backslash U} r_v \ge \sum_{e\in E: e\subseteq U} s_e + \sum_{e\in E: e\subseteq V\backslash U} s_e \\
	&= \sum_{e\in E}s_e \ge 1, 
\end{align*}
giving $t(G)\ge 1$. Since $t(G)\le 1$ for all hypergraphs, we have $t(G)=1$. 

For the \emph{only if} part, we prove the contrapositive that $t(G)<1$ if $G$ is connected. We construct a new graph $G'=(V,E')$ based on $G$: the new edge set $E'$ consists of all simple edges $(v,v')$ such that $\{v,v'\}\subseteq e$ for some hyperedge $e\in E$ (with multiplicities for each such $e$). We show that $t(G)\le t(G')$: in fact, for any feasible solution $(r_v')_{v\in V}$ and $(s_e')_{e\in E'}$ to the linear program for $G'$, the following solution
\begin{align*}
	r_v = r_v', \quad \forall v\in V,\qquad s_e = \sum_{e'\in E': e'\subseteq e} s_e', \quad \forall e\in E,
\end{align*}
is also feasible to the linear program for $G$, while with the same objective value. It remains to prove that $t(G')<1$. Since $G$ is connected, so is the $2$-uniform hypergraph $G'$. Now there are two ways to establish $t(G') < 1$. The first proof uses the operational meaning of $t(G')$, and it is shown in Appendix \ref{sec.example} that a communication rate $(n-2)/(n-1)$ could be achieved for any connected $2$-uniform graph $G'$. The second proof directly provides a feasible solution to the linear program for $G'$: find an arbitrary spanning tree $T=(V,E_T)$ of $G'$ with $|E_T| = n-1$, and set
\begin{align*}
	s_e = \frac{1}{n-1}\cdot \1(e\in E_T), \qquad r_v = \frac{\deg_T(v)-1 }{n-1}. 
\end{align*}
Clearly $\sum_{e\in E} s_e = 1$ and $\sum_{v\in V}r_v = (n-2)/(n-1)<1$. Now it suffices to check that this solution is feasible, i.e. for all non-empty $U\subsetneq V$, it holds that
\begin{align*}
	\sum_{v\in U} (\deg_T(v) - 1)  \ge \sum_{e\in E_T} \1(e\subseteq U). 
\end{align*}
Let $\text{cut}(U)$ be the cut size of $U$ in $T$, and $m(U)$ be the number of edges in the tree $T$ restricted to vertex set $U$. By simple algebra, the LHS is $2m(U) + \text{cut}(U) - |U|$, the RHS is $m(U)$, so it remains to show that $\text{cut}(U) + m(U) \ge |U|$. Since $T$ is a tree, it is clear that $\text{cut}(U) \ge C(U)$ and $m(U) = |U| - C(U)$, where $C(U)$ is the number of connected components in the restriction of $T$ to $U$; therefore, $\text{cut}(U) + m(U)\ge |U|$ holds. 
}

\subsection{Proof of Corollary \ref{cor.lower}}\label{corollary2proof}
Choosing $U = V \backslash \{v\}$ in Theorem \ref{thm.main.lower} for all $v\in V$ and summing up give
\begin{align*}
(n-1)\sum_{v\in V} r_v &= \sum_{v\in V}\sum_{u\in V\backslash \{v\}} r_u \\
&\stepa{\ge} \sum_{v\in V}\sum_{e\in E: e\subseteq V\backslash \{v\}} s_e \\
&\stepb{=} (n-k)\sum_{e\in E} s_e \\
&\stepc{\ge } n-k, 
\end{align*}
where inequalities (a) and (c) are due to the constraints in the linear program, and (b) follows from the fact that every edge $e$ is counted $n-k$ times in the summation in a $k$-uniform hypergraph. A rearrangement gives the proof. 

\subsection{An Asymptotic Achievability Scheme}\label{asymptachv}
The lower bound in Theorem \ref{thm.main.lower} is attainable asymptotically via linear network coding. The idea is essentially contained in \cite{nitinawarat2010perfect}, and we present it here for completeness. 

Let $t^\star$ be the minimum objective value of the linear program in Theorem \ref{thm.main.lower}. Then for any $t > t^\star$, there exists some feasible solution $(r_v)_{v\in V}, (s_e)_{e\in E}$ with $\sum_{v\in V}r_v / \sum_{e\in E} s_e \le t$ and all inequality constraints being strict. Let $N>0$ be a large integer, and without loss of generality we assume that $Nr_v, Ns_e$ are all integers. Consider the following scheme: 
\begin{enumerate}
	\item For any $e\in E$, toss the coin associated with the edge $e$ exactly $Ns_e$ times, and represent the outcomes by a binary vector $R_e\in \FF_2^{Ns_e}$; 
	\item For each player $v\in V$, she concatenates all vectors $R_e$ known to her into a long vector $z_v$ with length $\ell_v$, generates a random matrix $L_v$ uniformly distributed on $\FF_2^{Nr_v \times \ell_v}$, and writes the product $M_v = L_vz_v$ on the blackboard; 
	\item For decoding, each player $v\in V$ solves the linear system with observations $(z_v, (M_u)_{u\neq v} )$ to recover all vectors $(R_e)_{e\in E}$. 
\end{enumerate}

Clearly, the total length of the message written on the blackboard is $N\sum_{v\in V} r_v$, and the length of the output sequence is $N\sum_{e\in E} s_e$. Consequently, the communication rate is $\sum_{v\in V}r_v / \sum_{e\in E} s_e$ which is at most $t$. It remains to show that with positive probability, the above scheme is error free. Since the coding scheme is linear, a decoding error occurs iff there exists some non-zero vector $z=(z_v)_{v\in V}\neq 0$ such that $z_v = 0$ for some $v\in V$, and $L_vz_v = 0$ for all $v\in V$. By the union bound, the probability of error $p_{\text{error}}$ satisfies
\begin{align}\label{eq.union_bound}
&p_{\text{error}} \le\nonumber\\
& \sum_{\emptyset\subsetneq U\subsetneq V} \bP\left(\exists z=(z_v)_{v\in V} \text{ supported on }U \text{ s.t. } L_vz_v = 0, \forall v\right),
\end{align}
where we call that $z$ is supported on $U\subseteq V$ iff $z_u \neq 0$ for all $u\in U$ while $z_u = 0$ for all $u\notin U$. For each individual term in \eqref{eq.union_bound}, note that if $z$ is supported on $U$, then all random outcomes $R_e$ must be zero except for $(R_e)_{e\in E: e\subseteq U }$. Furthermore, for each fixed $z$ supported on $U$, the probability of $L_vz_v = 0$ for all $v$ is exactly
$$
2^{-\sum_{v\in U} Nr_v}  = 2^{-N\sum_{v\in U} r_v}. 
$$
Hence, by a union bound again, we conclude that for all $\emptyset\subsetneq U\subsetneq V$, 
\begin{align}\label{eq.pointwise_bound}
&\bP\left(\exists z=(z_v)_{v\in V} \text{ supported on }U \ \text{s.t. } L_vz_v = 0, \forall v\right) \nonumber\\
&\le 2^{\sum_{e\in E: e\subseteq U} Ns_e}\cdot 2^{-N\sum_{v\in U} r_v} \\
&= 2^{-N(\sum_{v\in U} r_v - \sum_{e\in E: e\subseteq U} s_e )}. 
\end{align}
Since all inequality constraints of the linear program are strict for $(r_v)_{v\in V}$ and $(s_e)_{e\in E}$, the above quantity is exponentially small, and \eqref{eq.union_bound}--\eqref{eq.pointwise_bound} gives $p_{\text{error}}<1$ by choosing $N$ large enough. Therefore, there exists one realization of the random matrices such that the resulting scheme is error free, as desired. 

\section{Proof of Theorem \ref{thm.decodable}}\label{subsec.correctness}
In this subsection, we show that every player may decode the random vector $X$ under the communication strategy in Definition \ref{def.strategy}, and thereby complete the proof of Theorem \ref{thm.decodable}. 

First we introduce some notations. Given the minimal topologically $k$-connected graph $G=([n],E)$, let $A$ be the incidence matrix of $G$ (as per the proof of Lemma \ref{thm.connectivity}). For linear subspaces $S,T$ of $V$, denote by $S^\perp$ the orthogonal complement of $S$, and by $S\oplus T$ the direct sum of $S$ and $T$. For any column vector $v$ and hyperedge $e\in E$, denote by $v(e)\in \FF_2$ the entry of $v$ corresponding to the hyperedge $e$. For any $(k-1)$-tuple $t\in \binom{[n]}{k-1}$, denote by $a_t$ the corresponding column vector of $A$. Note that $a_t(e)=\1(t\subseteq e)\in \FF_2$ for $e\in E$, and we will abuse notation slightly to write $a_t(e)=\1(t\subseteq e)$ for any $e\in \binom{[n]}{k}$. Finally, for any $e\in \binom{[n]}{k}$, denote by $\chi_e\in \FF_2^{|E|}$ the characteristic column vector of the hyperedge $e$ defined as $\chi_e(e')=\1(e=e')$ for any $e'\in E$.

To show that every player knows the random vector $X$, by symmetry it suffices to prove that player $1$ may decode $X$. Note that the available information for player $1$ comes from two sources: firstly, he directly knows $(R_e: 1\in e\in E)$ based on the random coins shared with him; secondly, he may see the messages $M_2,\cdots,M_n$ written by others on the blackboard. Since each bit of message corresponds to one linear equation of $X$, player $1$ may solve $X$ via a linear system of the form $BX=y$, where each entry of $y$ is either the randomness already known at player $1$ or the message written on the blackboard, and the matrix $B$ takes the form in Figure \ref{fig.B}.

\begin{figure}[h]
	\centering
	\begin{tikzpicture}[
	style1/.style={
		matrix of math nodes,
		every node/.append style={text width=#1,align=center,minimum height=5ex},
		nodes in empty cells,
		left delimiter=[,
		right delimiter=],
	}]
	\matrix[style1=0.65cm] (B)
	{
		& & & & \\
		& & & & \\
		& & & & \\
		& & & & \\
		& & & & \\
		& & & & \\
	};
	\draw [dashed] (B-1-1.south west) -- (B-1-5.south east); 
	\draw [dashed] (B-1-2.north east) -- (B-1-2.south east); 
	\draw [dashed] (B-2-1.south west) -- (B-2-5.south east); 
	\draw [dashed] (B-3-1.south west) -- (B-3-5.south east); 
	\draw [dashed] (B-4-1.south west) -- (B-4-5.south east); 
	\draw [dashed] (B-5-1.south west) -- (B-5-5.south east); 
	\node at ([xshift=-2pt] B-1-2.west) {$I$}; 
	\node at (B-1-4) {$0$}; \node at (B-2-3) {$B_2$}; 
	\node at (B-4-3) {$B_i$}; \node at (B-6-3) {$B_n$}; 
	\node at ([yshift=2pt] B-3-3) {$\vdots$}; \node at ([yshift=2pt] B-5-3) {$\vdots$}; 
	
	\draw[decoration={brace,mirror,raise=12pt},decorate] (B-1-1.north west) -- node[left=15pt] {Source I} (B-1-1.south west);
	\draw[decoration={brace,mirror,raise=12pt},decorate] (B-2-1.north west) -- node[left=15pt] {Source II} (B-6-1.south west);
	\draw[decoration={brace,raise=12pt},decorate] (B-4-5.north east) -- node[right=15pt] {Player $i$} (B-4-5.south east);
	\end{tikzpicture}
	\caption{Structure of the matrix $B$.}\label{fig.B}
\end{figure}

Clearly the number of unknowns in this linear system is $|E|=\binom{n-1}{k-1}$, and the number of linear equations is also
\begin{align*}
&|\{e\in E:1\in e\}| + \sum_{i=2}^n \left(|\{e\in E: i\in e\} - \binom{n-2}{k-2}\right) \\
&= k|E| - (n-1)\binom{n-2}{k-2} = \binom{n-1}{k-1},
\end{align*}
we conclude that $B$ is a square matrix. Hence, to prove that $BX=y$ has a unique solution $X$, it suffices to show that the matrix $B$ is of full rank, or equivalently, the row vectors of $B$ span the entire vector space $\FF_2^{|E|}$. Let $T_i\subseteq \FF_2^{|E|}$ be the row space of $B_i$ for $i\in [n]$ (where $B_1\triangleq [I,0]$), it further suffices to show that $\oplus_{i=1}^n T_i=\FF_2^{|E|}$. 

Next we characterize the vector spaces $T_i$. For $i=1$, clearly 
\begin{align}\label{eq.T1}
T_1 = \text{span}_{\FF_2}(\chi_e: 1\in e\in E) = [\text{span}_{\FF_2}(\chi_e: 1\notin e\in E)]^\perp. 
\end{align}
For $i>1$, let $A_i$ be the incidence matrix of the induced hypergraph $G_i$ (an illustration is shown in Figure \ref{fig.A}, with $A'$ replaced by $A_i$). By the construction of the strategy in Definition \ref{def.strategy}, each row of $B_i$ corresponds to some selection of rows in $A_i$ such that the selected rows sum into zero. Moreover, since player $i$ does not know $(R_e: i\notin e\in E)$ when writing on the blackboard, each row of $B_i$ is also supported on $(e\in E: i\in e)$. Hence, the restriction of rows of $B_i$ on the coordinates $\{e\in E: i\in e\}$ exactly span the nullspace of $A_i$, regardless of the choice of the minimal $(k-1)$-connected subgraph $G_i^\star$. Adding the support constraint together, we conclude that
\begin{align}\label{eq.Ti}
T_i = \left[\text{span}_{\FF_2}\left(\left(a_t: i\in t\in \binom{[n]}{k-1}\right), \left(\chi_e: i\notin e\in E\right)\right)\right]^\perp
\end{align}
for $i>1$. By \eqref{eq.T1} and \eqref{eq.Ti}, writing
\begin{align*}
S_1 &= \text{span}_{\FF_2}(\chi_e: 1\notin e\in E), \\
S_i &= \text{span}_{\FF_2}\left(\left(a_t: i\in t\in \binom{[n]}{k-1}\right), \left(\chi_e: i\notin e\in E\right)\right)
\end{align*}
for $i>1$, the identity $(\oplus_{i=1}^n T_i)^\perp = \cap_{i=1}^n T_i^\perp$ implies that the desired result $\oplus_{i=1}^n T_i=\FF_2^{|E|}$ is further equivalent to $\cap_{i=1}^n S_i = \{0\}$. 

Now suppose that $v\in \cap_{i=1}^n S_i$, then by definitions of $S_i$, we may write 
\begin{align}\label{eq.expansion}
v = \sum_{e\in E: 1\notin e} \beta_e^{(1)}\chi_e = \sum_{t: i\in t} \alpha_t^{(i)} a_t + \sum_{e\in E: i\notin e} \beta_e^{(i)}\chi_e, \qquad \forall i>1,
\end{align}
where $\alpha_t^{(i)}, \beta_e^{(i)}\in \FF_2$ are some binary coefficients. We may define $\alpha_t^{(1)}=0$ for any $t\ni 1$ to make \eqref{eq.expansion} symmetric in $i\in [n]$. Now for any hyperedge $e^\star=(i_1,\cdots,i_k)\in E$, evaluating both sides of \eqref{eq.expansion} at coordinate $e^\star$ yields
\begin{align}\label{eq.equation_alpha}
\sum_{t: i_j\in t\subseteq e^\star} \alpha_t^{(i_j)}  = v(e^\star), \qquad \forall j\in [k]. 
\end{align}
As a result, we have arrived at another system of linear equations with unknowns $(\alpha_t^{(i)}: i\in t)$ and $(v(e): e\in E)$. The number of unknowns for this system is
\begin{align*}
\binom{n}{k-1}\cdot (k-1) + |E| = \frac{(n-1)k+1}{n-k+1}\cdot\binom{n-1}{k-1}. 
\end{align*}
However, the number of linear equations of type \eqref{eq.equation_alpha} is only $k|E|$, and we need an additional number of
\begin{align*}
\frac{(n-1)k+1}{n-k+1}\cdot\binom{n-1}{k-1} - k|E| = (k-1)\cdot\binom{n-1}{k-2}
\end{align*}
boundary conditions. We claim that the boundary condition can be $\alpha_t^{(i)}=0$ whenever $1\in t$. For $i=1$, this is simply our special treatment for the player $1$. For $i>1$, we need the following lemma. 
\begin{lemma}\label{lemma.column_basis}
	Let $G$ be a minimal topologically $k$-connected hypergraph with incidence matrix $A$. Then the column vectors $(a_t: 1\notin t)$ constitute a linearly independent column basis of $A$. 
\end{lemma}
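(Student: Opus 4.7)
The plan is to combine a dimension count with an explicit column relation inherited from the complete $k$-uniform hypergraph. First, by Lemma \ref{thm.connectivity} together with the minimality of $G$, the row vectors of $A$ are linearly independent (the proof of Lemma \ref{thm.connectivity} shows that a minimal topologically $k$-connected hypergraph corresponds to a linearly independent basis of the row span of the incidence matrix $A^\star$ of the complete $k$-uniform hypergraph), so $\mathrm{rank}(A) = |E| = \binom{n-1}{k-1}$. On the other hand, the number of $(k-1)$-tuples $t \in \binom{[n]}{k-1}$ with $1 \notin t$ equals $\binom{n-1}{k-1}$, matching the rank exactly. Hence it suffices to prove that the columns $(a_t : 1 \notin t)$ span the column space of $A$; linear independence then follows automatically from the matching cardinality.

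To establish spanning, I would prove the following identity in $A^\star$: for any $t \in \binom{[n]}{k-1}$ with $1 \in t$, setting $t_j \triangleq (t \setminus \{1\}) \cup \{j\}$ for each $j \in [n] \setminus t$ (so that $1 \notin t_j$ since $j \neq 1$), we have
\[
a_t^\star \;=\; \sum_{j \in [n] \setminus t} a_{t_j}^\star \qquad \text{in } \FF_2^{\binom{n}{k}}.
\]
The proof is a short case analysis for each hyperedge $e^\star \in \binom{[n]}{k}$, split by whether $1 \in e^\star$ and whether $t \setminus \{1\} \subset e^\star$. The two non-trivial cases are: (i) $1 \in e^\star$ with $t \subset e^\star$, where $e^\star$ contains a unique element $j^\star \in [n] \setminus t$, so exactly one index $j = j^\star$ yields $t_{j^\star} \subset e^\star$, and both sides equal $1$; and (ii) $1 \notin e^\star$ with $t \setminus \{1\} \subset e^\star$, where $e^\star$ has exactly two elements in $[n] \setminus t$ (the two vertices of $e^\star$ outside $t \setminus \{1\}$, both distinct from $1$), making the right side equal to $2 \equiv 0$ in $\FF_2$, while the left side vanishes because $1 \notin e^\star$. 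The remaining cases make both sides trivially zero.

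Finally, since the rows of $A$ are a subset of the rows of $A^\star$, any $\FF_2$-linear relation among the columns of $A^\star$ restricts to the corresponding relation among the columns of $A$. Thus the identity above transfers to $a_t = \sum_{j \in [n] \setminus t} a_{t_j}$ in $A$, so every column of $A$ indexed by a tuple containing $1$ lies in the span of the columns indexed by tuples avoiding $1$. Combined with the opening dimension count, this proves the lemma. The only delicate step is the parity bookkeeping in case (ii): one must use $1 \in t$ crucially to ensure that the two ``extra'' vertices of $e^\star$ both lie in $[n] \setminus t$ and not on $\{1\}$, so that the count on the right is exactly $2$.
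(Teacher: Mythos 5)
Your proof is correct and takes a genuinely different route from the paper. The paper proves linear independence directly: assuming $\sum_{t:1\notin t}\alpha_t a_t=0$, it evaluates at each $e\in E$, then uses the generation relation \eqref{eq.generation} together with topological $k$-connectivity to extend the vanishing to \emph{all} $e\in\binom{[n]}{k}$, and finally picks $e^\star = t^\star\cup\{1\}$ to isolate each coefficient $\alpha_{t^\star}$ (since $t^\star$ is the unique $(k-1)$-tuple in $e^\star$ avoiding $1$). Spanning is then obtained for free from the rank count $\mathrm{rank}(A)=\binom{n-1}{k-1}$. You invert this: you prove spanning directly via the explicit identity $a_t^\star=\sum_{j\in[n]\setminus t} a_{t_j}^\star$ in the complete hypergraph's incidence matrix $A^\star$ (your parity check is correct: in the case $1\notin e^\star$, $t\setminus\{1\}\subset e^\star$, the two extra vertices of $e^\star$ both lie in $[n]\setminus t$ since neither is $1$, giving $2\equiv 0$), restrict it to the rows of $A$, and obtain linear independence from the same rank count. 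The practical difference is where topological connectivity is used: the paper invokes it twice (once for the rank count and once through \eqref{eq.generation} to ``fill in'' the missing hyperedges), while your argument uses it only once, in the rank count — your spanning identity holds for \emph{every} $k$-uniform hypergraph because it is purely a combinatorial statement about $A^\star$ that survives row deletion. This makes your spanning step structure-free and arguably cleaner, at the cost of not exhibiting the explicit coefficient-extraction mechanism that the paper reuses (in the guise of \eqref{eq.generation}) later in the correctness argument of Section \ref{subsec.correctness}.
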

\begin{proof}
	Since $\text{rank}(G) = \binom{n-1}{k-1}=|\{t\in \binom{[n]}{k-1}: 1\notin t\}|$, it suffices to prove that the column vectors $(a_t: 1\notin t)$ are linearly independent over $\FF_2$. Suppose that $\sum_{t: 1\notin t} \alpha_t a_t = 0$ for coefficients $\alpha_t\in \FF_2$, evaluating both sides at hyperedge $e\in E$ yields
	\begin{align*}
	\sum_{t: 1\notin t} \alpha_t a_t(e) = 0,\qquad \forall e\in E. 
	\end{align*}
	Recall that we have slightly abused the notation and defined $a_t(e)=\1(t\subseteq e)$ for any $e\in \binom{[n]}{k}$. Under the general notation, if the hyperedge $e$ is generated by $e_1,\cdots,e_m\in E$, then
	\begin{align}\label{eq.generation}
	\sum_{i=1}^m a_t(e_i) = a_t(e). 
	\end{align}
	In fact, \eqref{eq.generation} can be shown by comparing the number of occurrences of each $(k-1)$-tuple $t$ at both sides, and the generation step in Definition \ref{def.connectivity} ensures that they are of the same parity. With the help of \eqref{eq.generation}, and using the fact that $G$ is topologically $k$-connected, we have
	\begin{align*}
	\sum_{t: 1\notin t} \alpha_t a_t(e) = 0,\qquad \forall e\in \binom{[n]}{k}. 
	\end{align*}
	Now for any $t^\star\in \binom{[n]}{k-1}$, choosing $e^\star = t^\star \cup \{1\}$ in the previous identity yields to $\alpha_{t^\star}=0$, which proves the desired linear independence. 
\end{proof}
\begin{remark}
	Lemma \ref{lemma.column_basis} is the first occurrence where we \emph{require} that $G$ is topologically $k$-connected, while previously we only \emph{assume} this property without really using it. The key to this property is equation \eqref{eq.generation}, which implies that as long as some linear equations of column vectors $a_t$ hold for all $e\in E$, it will hold for any $k$ tuples $e\in \binom{[n]}{k}$. 	
\end{remark}

Applying Lemma \ref{lemma.column_basis} to the incidence matrix of the induced hypergraphs (i.e., the matrix $A'$ in Figure \ref{fig.A}), we conclude that the column vectors $(a_t: i\in t,1\notin t)$ is a linearly independent basis of $(a_t: i\in t)$. Therefore, we may set $\alpha_t^{(i)}=0$ whenever $1\in t$ in \eqref{eq.equation_alpha} to remove the redundant variables.

Let the vector $\gamma$ be the collection of all unknowns $\alpha_t^{(i)}$ and $v(e)$, by the previous discussion, we arrive at a system of linear equations $D\gamma = 0$, where $D$ is a square matrix. Specifically, the top rows of $D$ constitute the identity matrix concatenated with zeros corresponding to the boundary conditions $\alpha_t^{(i)}=0$ whenever $1\in t$. For other rows, each $e=(i_1,\cdots,i_k)\in E$ (where possibly $1\in e$) gives rise to $k$ linear equations of the form \eqref{eq.equation_alpha}, where $v(e)$ appears in all equations, and the variables $\alpha_t^{(i_j)}$ only appear in one equation for each $j\in [k]$. A pictorial illustration of the previous structures is shown in Figure \ref{fig.D}. 

\begin{figure*}[htbp]
	\centering
	\begin{tikzpicture}[
	style1/.style={
		matrix of math nodes,
		every node/.append style={text width=#1,align=center,minimum height=5ex},
		nodes in empty cells,
		left delimiter=[,
		right delimiter=],
	}]
	\matrix[style1=0.6cm] (D)
	{
		& & & & & & & & & & & & & & \\ 
		& & & & & & & & & & & & & & \\ 
		& & & & & & & & & & & & & & \\ 
		& & & & & & & & & & & & & & \\ 
		& & & & & & & & & & & & & & \\ 
		& & & & & & & & & & & & & & \\ 
		& & & & & & & & & & & & & & \\ 
		& & & & & & & & & & & & & & \\ 
	};
	\draw [dashed] (D-2-1.south west) -- (D-2-15.south east); 
	\draw [dashed] (D-1-2.north east) -- (D-8-2.south east); 
	\draw [dashed] (D-3-3.north east) -- (D-8-3.south east); 
	\draw [dashed] (D-3-5.north east) -- (D-8-5.south east); 
	\draw [dashed] (D-3-6.north east) -- (D-8-6.south east); 
	\draw [dashed] (D-3-8.north east) -- (D-8-8.south east); 
	\draw [dashed] (D-3-9.north east) -- (D-8-9.south east); 
	\draw [dashed] (D-3-11.north east) -- (D-8-11.south east); 
	\draw [dashed] (D-3-12.north east) -- (D-8-12.south east);
	\node at ([xshift=-2pt] D-1-2.south west) {$I$}; 
	\node at (D-1-9.south) {$0$}; 
	\node at ([xshift=-2pt] D-6-1.east) {$\star$};
	\node at (D-6-3) {$\cdots$}; 
	\node at (D-6-6) {$\cdots$}; 
	\node at (D-6-9) {$\cdots$}; 
	\node at (D-6-12) {$\cdots$}; 
	\node at (D-4-4.east) {$1\ \ 1\ \ 1$}; 
	\node at (D-5-7.east) {$1\ \ 1\ \ 1$}; 
	\node at (D-7-10.east) {$1\ \ 1\ \ 1$}; 
	\node at (D-4-14) {$1$}; \node at (D-5-14) {$1$}; \node at (D-7-14) {$1$}; 
	
	\draw[decoration={brace,raise=5pt},decorate] (D-1-1.north west) -- node[above=7pt] {boundary equations} (D-1-2.north east);	
	\draw[decoration={brace,mirror,raise=12pt},decorate] (D-4-1.north west) -- node[left=15pt]  {$e=(i_1,\cdots,i_k)$} (D-7-1.south west);
	\draw[decoration={brace,mirror,raise=5pt},decorate] (D-8-4.south west) -- node[below=7pt]  {$\{\alpha_t^{(i_1)}: 1\notin t\}$} (D-8-5.south east);
	\draw[decoration={brace,mirror,raise=5pt},decorate] (D-8-7.south west) -- node[below=7pt]  {$\{\alpha_t^{(i_2)}: 1\notin t\}$} (D-8-8.south east);
	\draw[decoration={brace,mirror,raise=5pt},decorate] (D-8-10.south west) -- node[below=7pt]  {$\{\alpha_t^{(i_k)}: 1\notin t\}$} (D-8-11.south east);
	\draw[decoration={brace,mirror,raise=5pt},decorate] (D-8-14.south west) -- node[below=7pt]  {$v(e)$} (D-8-14.south east);
	\end{tikzpicture}
	\caption{Structure of the matrix $D$.}\label{fig.D}
\end{figure*}

Note that it remains to prove that $\gamma=0$, it suffices to show that $D$ is of full rank. Let $D^\star$ be the sub-matrix of $D$ at the lower right corner of Figure \ref{fig.D}, it further suffices to prove that $D^\star$ is of full rank, and in particular, the columns of $D^\star$ are linearly independent over $\FF_2$. Let $(d_t^{(i)}: 1\notin t, i\in t)$ and $(d_{v(e)}: e\in E)$ be the column vectors of $D^\star$, and for each $e\in E$, we overload our notation $v(e)$ to denote the $k$-dimensional projection of the column vector $v$ to the $k$ coordinates corresponding to $e$. Suppose that
\begin{align}\label{eq.linear_ind_D}
0 = \sum_{i=2}^n \sum_{t: i\in t, 1\notin t} \delta_t^{(i)}d_t^{(i)} + \sum_{e\in E} \delta_e d_{v(e)}
\end{align}
holds for some coefficients $\delta_t^{(i)}, \delta_e\in \FF_2$. Note that for $e\in E$, we have
\begin{align}
&d_t^{(i)}(e) \in \left\{ \left[\begin{matrix}
0 \\
0 \\
\vdots \\
0
\end{matrix}\right], \left[\begin{matrix}
1 \\
0 \\
\vdots \\
0 
\end{matrix}\right], \left[\begin{matrix}
0 \\
1 \\
\vdots \\
0 
\end{matrix}\right], \cdots, \left[\begin{matrix}
0 \\
0 \\
\vdots \\
1 
\end{matrix}\right]
\right\},
\nonumber\\
 & d_{v(e')}(e)\in \left\{ \left[\begin{matrix}
0 \\
0 \\
\vdots \\
0
\end{matrix}\right], \left[\begin{matrix}
1 \\
1 \\
\vdots \\
1 
\end{matrix}\right] \right\}\label{eq.projection}. 
\end{align}
In fact, we may write $d_t^{(i)}(e)=a_t(e)\cdot e_{j_i(t)}$, where $a_t(e)=\1(t\subseteq e)$ is the evaluation of the $t$-th column vector of the incidence matrix $A$ on the vertex $v$, and $e_j$ is the $j$-th canonical vector of $\FF_2^k$. Note that the index $j_i(t)$ only depends on the choice of the permutation of elements of $e$, and thus $j_i(t)\neq j_{i'}(t)$ for $i\neq i'\in t$. By equality \eqref{eq.generation} and the topological $k$-connectivity of $G$, we may evaluate both sides of \eqref{eq.linear_ind_D} on all $e\in \binom{[n]}{k}$, with projections of column vectors given by \eqref{eq.projection}. Hence, given any $t^\star\in \binom{[n]}{k-1}$ with $1\notin t$, we may form the hyperedge $e^\star=t^\star\cup \{1\}$, and evaluating $e^\star$ on both sides of \eqref{eq.linear_ind_D} yields
\begin{align}\label{eq.final_equality}
0 = \sum_{i\in t^\star} \delta_{t^\star}^{(i)} d_{t^\star}^{(i)}(e^\star) + c\left[\begin{matrix}
1 \\
1 \\
\vdots \\
1 
\end{matrix}\right],
\end{align}
where $c\in \FF_2$ is some scalar. By our previous discussion, there are $(k-1)$ terms in the summation, each of which is some canonical vector in $\FF_2^k$ with coefficient $\delta_{t^\star}^{(i)}$. Moreover, these canonical vectors (for different $i\in t^\star$) must be different. Hence, in order for \eqref{eq.final_equality} to hold, we must have $\delta_{t^\star}^{(i)}=0$ for all $i\in t^\star$ and $c=0$. By the arbitrariness of our choice of $t^\star$, we conclude that all coefficients in \eqref{eq.linear_ind_D} are zero, and thus $D^\star$ is linearly independent. Therefore, we have shown that every player may decode the random vector $X$ under the strategy in Definition \ref{def.strategy}, and thus completed the proof of Theorem \ref{thm.decodable}. 

\section{Proof of Theorem \ref{thm.cluster}}\label{theorem4}
	Firstly we compute $H(X)$ and $H(M)$ to verify that this strategy achieves the optimal communication rate. In $i$-th connected component, the strategy in Definition \ref{def.strategy} is employed $M_i$ times, and thus
\begin{align}\label{eq.HX}
H(X) &= \sum_{i=1}^m M_i\cdot \binom{|A_i|-1}{k-1} = \frac{C}{k-1}\sum_{i=1}^m (|A_i| - 1) \\
&= \frac{C(n-1)}{k-1},
\end{align}
where we have used Lemma \ref{lemma.path_connectivity_edge} in the last step. Similarly, summing the messages within components and across components, we arrive at
\begin{align}
H(M) &= \sum_{i=1}^m M_i\cdot \binom{|A_i|-2}{k-1} + \sum_{v\in V} C\cdot (\deg_{G_c}(v)-1) \nonumber \\
&= \frac{C}{k-1} \sum_{i=1}^m (|A_i| - k) + C\sum_{v\in V} (\deg_{G_c}(v)-1)\nonumber \\
&=  \frac{C(n-k)}{k-1}\label{eq.HM},
\end{align}
where \eqref{eq.HM} follows from both statements of Lemma \ref{lemma.path_connectivity_edge}. Combining \eqref{eq.HX} and \eqref{eq.HM}, we arrive at the desired communication rate. 

It remains to show that every player may decode the entire vector $X$ based on his own information and messages written on the blackboard. First we recall the following fact: for a topologically $k$-connected hypergraph $G=(V,E)$, a new player who is not in this hypergraph can decode all outcomes after seeing the messages on the blackboard following the strategy in Definition \ref{def.strategy}, as well as all coin tossing outcomes corresponding to edges of $G_v^\star$ (cf. Definition \ref{def.strategy}) for an \emph{arbitrary} player $v\in V$. In fact, using the additional information in $G_v^\star$ together with the messages $v$ writes on the blackboard, by the rules in Definition \ref{def.strategy}, the new player can decode the outcomes of all coins shared with $v$. Hence, the new player is effectively ``equivalent to'' $v$ in the sense that they have the same observations, and the new player can decode all outcomes (as $v$ can) by the proof in Section \ref{subsec.correctness}. 

By symmetry it suffices to show that any player $v_1\in A_1$ may decode the entire vector $X$. Firstly, by Theorem \ref{thm.main.upper} and the messages within the component $A_1$, the player $v_1$ can decode all outcomes in the component $A_1$. Since the hypergraph $G_c$ is path-connected, the component $A_1$ must intersect with other components, say $A_2$, at some point $v_2$. Now by the messages across the components $A_1$ and $A_2$ written by $v_2$, the player $v_1$ knows all coin tossing outcomes corresponding to edges of $G_{v_2}^\star$ in the component $A_2$. By the previous fact, now $v_1$ can decode all outcomes in the component $A_2$. This process may continue to cover all connected components due to the path connectivity of $G_c$, and we conclude that $v_1$ can decode the entire outcome vector $X$, as claimed. 

\section{Proofs of Main Lemmas}\label{appendix.proof_lemma}
\subsection{Proof of Lemma \ref{thm.connectivity}}\label{lemma1}
	For a $k$-uniform hypergraph $G=(V,E)$, define the following version of the incidence matrix $A$ of $G$: each row of $A$ corresponds to a hyperedge $e\in E$, and each column of $A$ corresponds to a $(k-1)$-tuple in $[n]$. The entries of $A$ are defined as 
	\begin{align*}
	A_{e,t} = \1(t\subseteq e) \in \FF_2, \qquad e\in E, t\in \binom{[n]}{k-1}. 
	\end{align*}
	Hence, the dimension of $A$ is $|E|\times \binom{n}{k-1}$ (see Figure \ref{fig.incidence_matrix} for an example). 
	
	\begin{figure}[!ht]
	\scalebox{0.77}{
		$$
		\begin{blockarray}{ccccccccccc}
		& (12) & (13) & (14) & (15) & (23) & (24) & (25) & (34) & (35) & (45) \\
		\begin{block}{c[cccccccccc]}
		(123) & 1 & 1 &  &  & 1 &  &  &  &  & \bigstrut[t] \\
		(124) & 1 &  &  1&  &  & 1 &  &  &   \\
		(134)&  & 1 & 1 &  &  &  &  & 1 &  \\
		(125)& 1 &  &  & 1 & &  &  1&  &  \\
		(235)& &  &  &  & 1 &  & 1 &  & 1 \\
		(245) &  &  &  &  & & 1 & 1 &  & &1\bigstrut[b]\\
		\end{block}
		\end{blockarray}\vspace*{-1.25\baselineskip}
		$$
		}
		\caption{Incidence matrix of the hypergraph in Figure \ref{fig.connected_3hypergraph}.}\label{fig.incidence_matrix}
	\end{figure}
	
	According to the definition of topological $k$-connectivity, a hyperedge $e$ can be generated by hyperedges $e_1,\cdots,e_m$ if and only if the rows corresponding to $e,e_1,\cdots,e_m$ sum into the zero vector in $\FF_2$. Let $A^\star$ be the incidence matrix of the complete $k$-uniform hypergraph, then a minimal topologically $k$-connected hypergraph is simply a linearly independent basis of the row vectors of $A^\star$. Hence, the number of hyperedges in any minimal topologically $k$-connected hypergraph is $\text{rank}(A^\star)$. 
	
	Consider the incidence matrix $A$ of a star graph, i.e., $E=\{e\in \binom{[n]}{k}: 1\in e\}$. We show that the rows of $A$ are linearly independent: for any tuple $t\in \binom{[n]}{k-1}$ with $1\notin t$, there is only one hyperedge of $A$ which contains $t$. Furthermore, any hyperedge $e\in \binom{[n]}{k}$ in the complete $k$-uniform hypergraph can be generated from this star graph: clearly $e\in E$ if $1\in e$, and $e$ can be generated by $e_1,\cdots,e_k$ if $1\notin e$, where $e_i=e\cup \{1\}\backslash \{i\text{-th element of }e\}$. Hence the rows of $A$ constitute a linearly independent basis of $A^\star$, and 
	\begin{align*}
	\text{rank}(A^\star) = |E| = \binom{n-1}{k-1}, 
	\end{align*}
	as desired. 

\subsection{Proof of Lemma \ref{lemma.induced_hypergraph}}\label{lemma2}
	It suffices to prove that $G_1$ is topologically $(k-1)$-connected, and the proof relies on linear algebra. Let $A$ be the incidence matrix of $G$ (as per the proof of Lemma \ref{thm.connectivity}), and $A'$ be the sub-matrix of $A$ consisting of rows (hyperedges) $e\ni 1$ and columns (tuples) $t\ni 1$. Relabeling the rows and columns of $A'$ by removing the common element $1$ in the indices, it is clear that $A'$ is the incidence matrix of $G_1$. A pictorial illustration is displayed in Figure \ref{fig.A}. 
	\begin{figure}[h]
		\centering
		\begin{tikzpicture}[
		style1/.style={
			matrix of math nodes,
			every node/.append style={text width=#1,align=center,minimum height=5ex},
			nodes in empty cells,
			left delimiter=[,
			right delimiter=],
		}]
		\matrix[style1=0.65cm] (A)
		{
			& & & & \\
			& & & & \\
			& & & & \\
			& & & & \\
		};
		\draw [dashed] (A-1-2.north east) -- (A-4-2.south east); 
		\draw [dashed] (A-2-1.south west) -- (A-2-2.south east); 
		\node at ([xshift=-15pt, yshift=-10pt] A-1-2) {$A'$};
		\node at ([xshift=-15pt, yshift=-12pt] A-3-2) {$0$};
		\node at ([yshift = -10pt] A-2-4) {$\star$}; 
		
		\draw[decoration={brace,mirror,raise=12pt},decorate] (A-1-1.north west) -- node[left=15pt] {$\{e\in E: 1\in e\}$} (A-2-1.south west);
		\draw[decoration={brace,raise=5pt},decorate] (A-1-1.north west) -- node[above=8pt] {$\{t\in \binom{[n]}{k-1}: 1\in t\}$} (A-1-2.north east);
		\draw[decoration={brace,raise=12pt},decorate] (A-1-5.north east) -- node[right=15pt] {$A$} (A-4-5.south east);
		\end{tikzpicture}
		\caption{An illustration of matrices $A$ and $A'$.}\label{fig.A}
	\end{figure}
	
	To show that $G_1$ is topologically $(k-1)$-connected, it is equivalent to show that the row space of $A'$ contains all $r_{e'}$ for $e'\in \binom{[n]\backslash \{1\}}{k-1}$, where $r_{e'}$ is the row vector corresponding to the hyperedge $e'$. Note that each $r_{e'}$ gives rise to a row vector $r_e$ for the original hypergraph $G$, with $e=e'\cup \{1\}$. Since $G$ is $k$-connected, the row vector $r_e$ can be written as the sum of some rows of $A$. Restricting to rows $\{e\in E: 1\in e\}$, it is clear from the pictorial illustration that the corresponding rows of $A'$ will sum into $r_{e'}$, as desired. 

\subsection{Proof of Lemma \ref{lemma.path_connectivity_edge}}\label{lemma3}
	We prove the first statement by induction on $|E|$. For the base case, if $E=\{A\}$ only consists of one hyperedge, then the path connectivity ensures $A=V$, and the result is obvious. Now suppose that the results holds for any hypergraph $G=(V,E)$ with $|E|<m$. We first show that there cannot be two hyperedges $A_1, A_2\in E$ such that $|A_1 \cap A_2| > 1$ in the cycle-free hypergraph $G$. In fact, if $u,v\in A_1\cap A_2$, then $u\overset{A_1}{\to} v\overset{A_2}{\to} u$ is a simple cycle in $G$, a contradiction. Hence, any two hyperedges $A_1, A_2$ are either disjoint or intersecting at one vertex. 
	
	Next we show that there must be a \emph{leaf} hyperedge in $G$, where $A\in E$ is defined to be a leaf hyperedge iff $|A \cap \left(\cup_{B\in E \backslash \{A\} } B\right)|=1$. Start from any hyperedge $A_0 \in E$: if $A_0$ is a leaf hyperedge, we are done. Otherwise, by path connectivity there must be some $v_0\in A_0$ and $A_1 \in E \backslash \{A_0\}$ such that $v_0\in A_1$. We are done if $A_1$ is a leaf hyperedge, and otherwise $A_1$ intersects with other hyperedges at more than one point, i.e., we may find some $v_1\in A_1 \backslash \{v_0\}, A_2\in E \backslash \{A_0,A_1\}$ such that $v_1\in A_2$. Continuing this process, we either arrive at some leaf hyperedge, or find some $v_k=v_{\ell}$ with $k<\ell$ in this process. The latter case is impossible, for $v_k\overset{A_{k+1}}{\to} v_{k+1} \overset{A_{k+2}}{\to} \cdots \overset{A_{\ell}}{\to} v_\ell$ is a cycle in $G$. Therefore, there must be a leaf hyperedge $A$ in $G$. 
	
	Now remove $A$ and all isolated $|A|-1$ vertices from $G$. It is straightforward to see that the remaining hypergraph is still path-connected and cycle-free, then by induction hypothesis 
	\begin{align*}
	\sum_{B\in E - \{A\}} (|B| - 1) = |V| - (|A|-1) - 1. 
	\end{align*} 
	Rearranging gives the desired result. 

For the second statement, by a double counting argument we have
		\begin{align*}
		\sum_{v\in V} \deg(v) &= \sum_{v\in V} \sum_{A\in E} \1(v\in A)\\
		& =  \sum_{A\in E} \sum_{v\in V} \1(v\in A)\\
		&= \sum_{A\in E} |A|. 
		\end{align*}
		Now the desired inequality follows from \cref{lemma.path_connectivity_edge}.

\balance
\section{Proof of Theorem \ref{thm.star-graph}}\label{appendix.star_graphs}
\subsection{The \emph{if} part}
We first prove the \emph{if} part by providing an explicit communication strategy. Without loss of generality we assume that the induced graph $G_{v^\star}$ is exactly a simple edge or a Hamilton cycle of odd length, as the general disjoint union can be handled in exactly the same way as Definition \ref{def.strategy_cluster}. Further, if $G_{v^\star}$ is a simple edge, then $G$ is a triangle and there is nothing to prove. Hence, it remains to consider the case where $G_{v^\star}$ is a Hamilton cycle of odd length: 
\begin{align*}
&G_{v^\star} =\\
& \{[2m+1], \{(1,2), (2,3), \cdots, (2m,2m+1), (2m+1,1) \}  \}. 
\end{align*}
By definition of induced graphs in Section \ref{subsec.strategy}, we may use $R_{1,2}$ to denote the randomness associated with $R_{(1,2,v^\star)}$ in the original star graph, and similarly for others. 

The communication strategy is as follows. The central node $v^\star$ writes the following three sets of messages on the blackboard: 
\begin{align*}
M_1 &= \{ R_{1,2} \oplus R_{3,4}, R_{1,2} \oplus R_{5,6}, \ldots, R_{1,2} \oplus R_{2m-1,2m} \}, \\
M_2 &= \{ R_{2,3} \oplus R_{4,5}, R_{2,3} \oplus R_{6,7}, \ldots, R_{2,3} \oplus R_{2m,2m+1} \}, \\
M_3 &= \{ R_{1,2} \oplus R_{2,3} \oplus R_{2m+1,1} \}. 
\end{align*}
Note that there are $2m-1=n-3$ bits in the message $M_1\cup M_2\cup M_3$, and the total amount of randomness is $2m+1 = n-1$ bits. Hence, if each player can decode all listed random bits then the communication rate is optimal. This can be easily shown as follows: first, the central vertex $v^\star$ knows all random bits; second, the special player $2$ can decode all other random bits directly as all messages involves either $R_{1,2}$ or $R_{2,3}$; finally, all other players can decode $R_{2,3}$ based on $M_2$ and $R_{1,2}$ based on $M_1$ (the player $2m+1$ additionally requires $M_3$), and are therefore as informative as the player $2$. The above arguments show that all players are able to decode all random bits, and therefore complete the proof of the \emph{if} part of Theorem \ref{thm.star-graph}. 

\subsection{The \emph{only if} part}
The \emph{only if} part is slightly more challenging. First, by the proof of Corollary \ref{cor.lower}, the assumption that the optimal communication rate $(n-3)/(n-1)$ is achievable implies the existence of non-negative scores $r_v$ assigned to each vertex $v$ and $s_e$ assigned to each hyperedges $e$ such that
\begin{align}
\sum_{v\in V} r_v &= \frac{n-3}{n-1}, \label{eq:sum_r} \\
\sum_{e\in E} s_e &= 1, \label{eq:sum_s} \\
\sum_{v\in V - \{v_0\} } r_v &= \sum_{e\in E: e\subseteq V-\{v_0\}  } s_e, \qquad \forall v_0\in V. \label{eq:equal_rs}
\end{align}
Choosing $v_0 = v^\star$ in \eqref{eq:equal_rs}, the RHS is zero, and the non-negativity of $r_v$ implies that $r_v = 0$ for all $v\neq v^\star$. Further, \eqref{eq:sum_r} shows that $r_{v^\star} = (n-3)/(n-1)$. Now choosing any $v_0\in V- \{v^\star\}$ in \eqref{eq:equal_rs} leads to
\begin{align*}
\sum_{e\in E: e\subseteq V-\{v_0\}  } s_e = \frac{n-3}{n-1}, 
\end{align*}
which together with \eqref{eq:sum_s} gives
\begin{align}\label{eq:fractional_match}
\sum_{e\in E: v_0\in e  } s_e = \frac{2}{n-1}, \qquad \forall v_0\in V - \{v^\star\}. 
\end{align}

Now we relate the condition \eqref{eq:fractional_match} to the notion of fractional matchings in fractional graph theory. Let $G = (V,E)$ be a classical graph (not a hypergraph), a \emph{fractional matching} $f$ of $G$ is an assignment $\{f(e)\}_{e\in E}$ to all edges of $G$ such that $f(e)\ge 0$ for all $e\in E$, $\sum_{e: v\in e} f(e) \le 1$ holds for all $v\in V$, and $\sum_{e\in E}f(e) = |V|/2$. To see the relationship, consider the induced graph $G_{v^\star}$ which is a classical graph, and since $G$ is a star graph, there is a bijection between $E(G)$ and $E(G_{v^\star})$. Hence, if we do not distinguish between $e\in E(G)$ and $e\in E(G_{v^\star})$, we may define $f(e) = (n-1)s_e/2$ for all $e\in E(G_{v^\star})$. We claim that $f$ is a fractional matching of the graph $G_{v^\star}$: in fact, \eqref{eq:fractional_match} shows that
\begin{align*}
\sum_{e: v_0\in e} f(e) = \frac{n-1}{2}\cdot \sum_{e\in E: v_0\in e  } s_e = 1, \qquad \forall v_0\in V(G_{v^\star}),
\end{align*}
and \eqref{eq:sum_s} shows that $\sum_{e\in E(G_{v^\star})} f(e) = (n-1)/2\cdot \sum_{e\in E} s_e = (n-1)/2 = |V(G_{v^\star})|/2$. Then the claimed result follows from the following fractional Tutte's theorem \cite[Proposition 2.2.2]{scheinerman2011fractional} which provides a necessary and sufficient condition for the existence of a fractional matching. 
\begin{theorem}
A simple graph $G$ has a fractional matching if and only if $G$ contains a vertex-disjoint union of simple edges or Hamilton cycles of odd length including all vertices. 
\end{theorem}

\bibliographystyle{IEEEtran}
\bibliography{di,Bibliography}

\newcommand{\noopsort}[1]{}
\begin{thebibliography}{10}
\providecommand{\url}[1]{#1}
\csname url@samestyle\endcsname
\providecommand{\newblock}{\relax}
\providecommand{\bibinfo}[2]{#2}
\providecommand{\BIBentrySTDinterwordspacing}{\spaceskip=0pt\relax}
\providecommand{\BIBentryALTinterwordstretchfactor}{4}
\providecommand{\BIBentryALTinterwordspacing}{\spaceskip=\fontdimen2\font plus
\BIBentryALTinterwordstretchfactor\fontdimen3\font minus
  \fontdimen4\font\relax}
\providecommand{\BIBforeignlanguage}[2]{{%
\expandafter\ifx\csname l@#1\endcsname\relax
\typeout{** WARNING: IEEEtran.bst: No hyphenation pattern has been}%
\typeout{** loaded for the language `#1'. Using the pattern for}%
\typeout{** the default language instead.}%
\else
\language=\csname l@#1\endcsname
\fi
#2}}
\providecommand{\BIBdecl}{\relax}
\BIBdecl

\bibitem{AhlswedeC93}
R.~{Ahlswede} and I.~{Csiszar}, ``Common randomness in information theory and
  cryptography. {I}. {S}ecret sharing,'' \emph{IEEE Transactions on Information
  Theory}, vol.~39, no.~4, pp. 1121--1132, July 1993.

\bibitem{mitzenmacher2005probability}
M.~Mitzenmacher and E.~Upfal, \emph{Probability and computing: Randomized
  algorithms and probabilistic analysis}.\hskip 1em plus 0.5em minus
  0.4em\relax Cambridge University Press, 2005.

\bibitem{kushilevitz1996communication}
E.~Kushilevitz and N.~Nisan, \emph{Communication Complexity}.\hskip 1em plus
  0.5em minus 0.4em\relax Cambridge University Press, 1996.

\bibitem{acharya2019domain}
J.~Acharya, C.~L. Canonne, Y.~Han, Z.~Sun, and H.~Tyagi, ``Domain compression
  and its application to randomness-optimal distributed goodness-of-fit,''
  \emph{arXiv preprint arXiv:1907.08743}, 2019.

\bibitem{AnantharamB07}
V.~Anantharam and V.~S. Borkar, ``Common randomness and distributed control: A
  counterexample,'' \emph{Systems \& Control Letters}, vol.~56, pp. 568--572,
  2007.

\bibitem{bennett2002entanglement}
C.~H. Bennett, P.~W. Shor, J.~A. Smolin, and A.~V. Thapliyal,
  ``Entanglement-assisted capacity of a quantum channel and the reverse shannon
  theorem,'' \emph{IEEE Transactions on Information Theory}, vol.~48, no.~10,
  pp. 2637--2655, 2002.

\bibitem{kushilevitz1997communication}
E.~Kushilevitz, ``Communication complexity,'' in \emph{Advances in
  Computers}.\hskip 1em plus 0.5em minus 0.4em\relax Elsevier, 1997, vol.~44,
  pp. 331--360.

\bibitem{gacs1973common}
P.~G{\'a}cs and J.~K{\"o}rner, ``Common information is far less than mutual
  information,'' \emph{Problems of Control and Information Theory}, vol.~2,
  no.~2, pp. 149--162, 1973.

\bibitem{Wyner75}
A.~Wyner, ``The common information of two dependent random variables,''
  \emph{IEEE Transactions on Information Theory}, vol.~21, no.~2, pp. 163--179,
  1975.

\bibitem{Ahlswede78}
R.~Ahlswede, ``Elimination of correlation in random codes for arbitrarily
  varying channels,'' \emph{Zeitschrift f{\"u}r Wahrscheinlichkeitstheorie und
  Verwandte Gebiete}, vol.~44, no.~2, pp. 159--175, 1978.

\bibitem{CsiszarN88}
I.~{Csisz\'ar} and P.~{Narayan}, ``The capacity of the arbitrarily varying
  channel revisited: positivity, constraints,'' \emph{IEEE Transactions on
  Information Theory}, vol.~34, no.~2, pp. 181--193, 1988.

\bibitem{AhlswedeC98}
R.~{Ahlswede} and I.~{Csisz\'{a}r}, ``Common randomness in information theory
  and cryptography. {II}. {CR} capacity,'' \emph{IEEE Transactions on
  Information Theory}, vol.~44, no.~1, pp. 225--240, 1998.

\bibitem{CsiszarN00}
I.~Csisz\'{a}r and P.~Narayan, ``Common randomness and secret key generation
  with a helper,'' \emph{IEEE Transactions on Information Theory}, vol.~46,
  no.~2, pp. 344--366, 2000.

\bibitem{VenkatesanA00}
S.~Venkatesan and V.~Anantharam, ``The common randomness capacity of a network
  of discrete memoryless channels,'' \emph{IEEE Transactions on Information
  Theory}, vol.~46, no.~2, pp. 367--387, 2000.

\bibitem{ZhaoC11}
L.~Zhao and Y.-K. Chia, ``The efficiency of common randomness generation,'' in
  \emph{Annual Allerton Conference on Communication, Control, and Computing},
  2011, pp. 944--950.

\bibitem{Maurer93}
U.~M. Maurer, ``Secret key agreement by public discussion from common
  information,'' \emph{IEEE Transactions on Information Theory}, vol.~39,
  no.~3, pp. 733--742, 1993.

\bibitem{CsiszarN04}
I.~{Csisz\'ar} and P.~{Narayan}, ``Secrecy capacities for multiple terminals,''
  \emph{IEEE Transactions on Information Theory}, vol.~50, no.~12, pp.
  3047--3061, 2004.

\bibitem{nitinawarat2010perfect}
S.~Nitinawarat and P.~Narayan, ``Perfect omniscience, perfect secrecy, and
  steiner tree packing,'' \emph{IEEE Transactions on Information Theory},
  vol.~56, no.~12, pp. 6490--6500, 2010.

\bibitem{DingCQRS18}
N.~Ding, C.~Chan, Q.~Zhou, R.~A. Kennedy, and P.~Sadeghi, ``Determining optimal
  rates for communication for omniscience,'' \emph{IEEE Transactions on
  Information Theory}, vol.~64, no.~3, pp. 1919--1944, 2018.

\bibitem{Tyagi13}
H.~{Tyagi}, ``Common information and secret key capacity,'' \emph{IEEE
  Transactions on Information Theory}, vol.~59, no.~9, pp. 5627--5640, 2013.

\bibitem{ghazi2018resource}
B.~Ghazi and T.~Jayram, ``Resource-efficient common randomness and secret-key
  schemes,'' in \emph{Proceedings of the Twenty-Ninth Annual ACM-SIAM Symposium
  on Discrete Algorithms}, 2018, pp. 1834--1853.

\bibitem{LiuCV17}
J.~Liu, P.~Cuff, and S.~Verdú, ``Secret key generation with limited
  interaction,'' \emph{IEEE Transactions on Information Theory}, vol.~63,
  no.~11, pp. 7358--7381, 2017.

\bibitem{mukherjee2016public}
M.~Mukherjee, N.~Kashyap, and Y.~Sankarasubramaniam, ``On the public
  communication needed to achieve sk capacity in the multiterminal source
  model,'' \emph{IEEE Transactions on Information Theory}, vol.~62, no.~7, pp.
  3811--3830, 2016.

\bibitem{chan2010mutual}
C.~Chan and L.~Zheng, ``Mutual dependence for secret key agreement,'' in
  \emph{IEEE Annual Conference on Information Sciences and Systems}, 2010, pp.
  1--6.

\bibitem{Rouayhebetal}
S.~El~Rouayheb, A.~Sprintson, and P.~Sadeghi, ``On coding for cooperative data
  exchange,'' in \emph{IEEE Information Theory Workshop on Information Theory},
  2010, pp. 1--5.

\bibitem{CourtadeH16}
T.~A. Courtade and T.~R. Halford, ``Coded cooperative data exchange for a
  secret key,'' \emph{IEEE Transactions on Information Theory}, vol.~62, no.~7,
  pp. 3785--3795, 2016.

\bibitem{ChanMKZ18}
C.~Chan, M.~Mukherjee, N.~Kashyap, and Q.~Zhou, ``On the optimality of secret
  key agreement via omniscience,'' \emph{IEEE Transactions on Information
  Theory}, vol.~64, no.~4, pp. 2371--2389, 2018.

\bibitem{ZhouC20}
Q.~Zhou and C.~Chan, ``Secret key generation for minimally connected
  hypergraphical sources,'' \emph{IEEE Transactions on Information Theory},
  vol.~66, no.~7, pp. 4226--4244, 2020.

\bibitem{chan2019secret}
C.~Chan, ``Secret key agreement for hypergraphical sources with limited total
  discussion,'' \emph{arXiv preprint arXiv:1910.01894v1}, 2019.

\bibitem{YeR07}
C.~Ye and A.~Reznik, ``Group secret key generation algorithms,'' in \emph{IEEE
  International Symposium on Information Theory}, 2007, pp. 2596--2600.

\bibitem{Nitinawaratetal10}
S.~Nitinawarat, C.~Ye, A.~Barg, P.~Narayan, and A.~Reznik, ``Secret key
  generation for a pairwise independent network model,'' \emph{IEEE
  Transactions on Information Theory}, vol.~56, no.~12, pp. 6482--6489, 2010.

\bibitem{Chanetal19}
C.~Chan, M.~Mukherjee, N.~Kashyap, and Q.~Zhou, ``Upper bounds via lamination
  on the constrained secrecy capacity of hypergraphical sources,'' \emph{IEEE
  Transactions on Information Theory}, vol.~65, no.~8, pp. 5080--5093, 2019.

\bibitem{matrixTree3uniform}
G.~Masbaum and A.~Vaintrob, ``{A New Matrix-Tree Theorem},'' \emph{arXiv
  Mathematics e-prints}, p. math/0109104, Sep 2001.

\bibitem{minimunSpanning2003}
T.~Polzin and S.~V. Daneshmand, ``On steiner trees and minimum spanning trees
  in hypergraphs,'' \emph{Oper. Res. Lett.}, vol.~31, no.~1, pp. 12--20, Jan.
  2003.

\bibitem{spanningTree3uniform}
A.~Goodall and A.~de~Mier, ``{Spanning trees of 3-uniform hypergraphs},''
  \emph{arXiv e-prints}, p. arXiv:1002.3331, Feb 2010.

\bibitem{Kalai1983}
G.~Kalai, ``Enumeration of {Q}-acyclic simplicial complexes,'' \emph{Israel
  Journal of Mathematics}, vol.~45, no.~4, pp. 337--351, Dec 1983.

\bibitem{simplicialMatrixTree}
A.~M. Duval, C.~J. Klivans, and J.~L. Martin, ``{Simplicial matrix-tree
  theorems},'' \emph{arXiv e-prints}, p. arXiv:0802.2576, Feb 2008.

\bibitem{chan2015multivariate}
C.~Chan, A.~Al-Bashabsheh, J.~B. Ebrahimi, T.~Kaced, and T.~Liu, ``Multivariate
  mutual information inspired by secret-key agreement,'' \emph{Proceedings of
  the IEEE}, vol. 103, no.~10, pp. 1883--1913, 2015.

\bibitem{yao1979some}
A.~C.-C. Yao, ``Some complexity questions related to distributive computing
  (preliminary report),'' in \emph{Proceedings of the eleventh annual ACM
  Symposium on Theory of Computing}, 1979, pp. 209--213.

\bibitem{mehlhorn1982vegas}
K.~Mehlhorn and E.~M. Schmidt, ``Las vegas is better than determinism in vlsi
  and distributed computing,'' in \emph{Proceedings of the fourteenth annual
  ACM Symposium on Theory of Computing}, 1982, pp. 330--337.

\bibitem{yannakakis1991expressing}
M.~Yannakakis, ``Expressing combinatorial optimization problems by linear
  programs,'' \emph{Journal of Computer and System Sciences}, vol.~43, no.~3,
  pp. 441--466, 1991.

\bibitem{karchmer1992fractional}
M.~Karchmer, E.~Kushilevitz, and N.~Nisan, ``Fractional covers and
  communication complexity,'' in \emph{Proceedings of the Seventh Annual
  Structure in Complexity Theory Conference}, 1992, pp. 262--274.

\bibitem{yao1983lower}
A.~C. Yao, ``Lower bounds by probabilistic arguments,'' in \emph{24th Annual
  Symposium on Foundations of Computer Science}, 1983, pp. 420--428.

\bibitem{newman1991private}
I.~Newman, ``Private vs. common random bits in communication complexity,''
  \emph{Information processing letters}, vol.~39, no.~2, pp. 67--71, 1991.

\bibitem{krause1996geometric}
M.~Krause, ``Geometric arguments yield better bounds for threshold circuits and
  distributed computing,'' \emph{Theoretical Computer Science}, vol. 156, no.
  1-2, pp. 99--117, 1996.

\bibitem{zhang2013information}
Y.~Zhang, J.~Duchi, M.~I. Jordan, and M.~J. Wainwright, ``Information-theoretic
  lower bounds for distributed statistical estimation with communication
  constraints,'' in \emph{Advances in Neural Information Processing Systems},
  2013, pp. 2328--2336.

\bibitem{acharya2018distributed}
J.~Acharya, C.~L. Canonne, and H.~Tyagi, ``Distributed simulation and
  distributed inference,'' \emph{arXiv preprint arXiv:1804.06952}, 2018.

\bibitem{acharya2018inference}
------, ``Inference under information constraints {I}: Lower bounds from
  chi-square contraction,'' \emph{arXiv preprint arXiv:1812.11476}, 2018.

\bibitem{braverman2016communication}
M.~Braverman, A.~Garg, T.~Ma, H.~L. Nguyen, and D.~P. Woodruff, ``Communication
  lower bounds for statistical estimation problems via a distributed data
  processing inequality,'' in \emph{Proceedings of the forty-eighth Annual ACM
  Symposium on Theory of Computing}.\hskip 1em plus 0.5em minus 0.4em\relax
  ACM, 2016, pp. 1011--1020.

\bibitem{han2018geometric}
Y.~Han, A.~{\"O}zg{\"u}r, and T.~Weissman, ``Geometric lower bounds for
  distributed parameter estimation under communication constraints,'' in
  \emph{Conference On Learning Theory}, 2018, pp. 3163--3188.

\bibitem{HanTZKPW20}
Y.~Han, K.~Tatwawadi, G.~R. Kurri, Z.~Zhou, V.~M. Prabhakaran, and T.~Weissman,
  ``Optimal communication rates and combinatorial properties for distributed
  simulation,'' \emph{arXiv preprint arXiv:1904.03271v2}, 2019.

\bibitem{knuth1976complexity}
D.~Knuth and A.~Yao, ``The complexity of nonuniform random number generation,''
  \emph{Algorithm and Complexity, New Directions and Results}, pp. 357--428,
  1976.

\bibitem{Cuff13}
P.~Cuff, ``Distributed channel synthesis,'' \emph{IEEE Transactions on
  Information Theory}, vol.~59, no.~11, pp. 7071--7096, 2013.

\bibitem{rotman1998topology}
J.~J~Rotman, \emph{An introduction to algebraic topology}.\hskip 1em plus 0.5em
  minus 0.4em\relax Springer, 1998.

\bibitem{Chandra:1983}
A.~K. Chandra, M.~L. Furst, and R.~J. Lipton, ``Multi-party protocols,'' in
  \emph{Proceedings of the Fifteenth Annual ACM Symposium on Theory of
  Computing}, 1983, pp. 94--99.

\bibitem{scheinerman2011fractional}
E.~R. Scheinerman and D.~H. Ullman, \emph{Fractional graph theory: a rational
  approach to the theory of graphs}.\hskip 1em plus 0.5em minus 0.4em\relax
  Courier Corporation, 2011.

\end{thebibliography}

\begin{IEEEbiographynophoto}{Yanjun Han}
	(Member, IEEE) received the B.Eng. degree (Hons.) in electronic engineering from Tsinghua University, Beijing, China, in 2015, and the M.S. and Ph.D. degrees from Stanford University in 2017 and 2021, respectively. He is currently a Post-Doctoral Scholar with the Simons Institute for the Theory of Computing, University of California at Berkeley, Berkeley. His research interests include statistical machine learning, high-dimensional and nonparametric statistics, information theory, online learning and bandits, and their applications.
\end{IEEEbiographynophoto}

\begin{IEEEbiographynophoto}{Kedar Tatwawadi}
	received the BTech and MTech in electrical engineering from Indian Institute of Technology Bombay, India, in 2014, and the M.S. and Ph.D. degrees from Stanford University in 2017 and 2020, respectively. He is currently a Research Scientist at WaveOne Inc. His research interests include data compression, learned video compression, information theory, and their applications.
\end{IEEEbiographynophoto}

\begin{IEEEbiographynophoto}
{Gowtham R. Kurri} (Member, IEEE) graduated from the International Institute of Information Technology, Hyderabad, India, with a B.\ Tech.\ degree in Electronics and Communication Engineering, in 2011. He received his M.Sc. and Ph.D. degrees from the Tata Institute of Fundamental Research, Mumbai, India in 2020. He is currently a Post-Doctoral Researcher at the School of Electrical, Computer and Energy Engineering at Arizona State University.

From 2011-2012, he worked as an Associate Engineer at Qualcomm India Private Limited, Hyderabad, India. From July to October, 2019, he was a Research Intern in the Blockchain Technology Group at IBM Research, Bangalore, India. 
\end{IEEEbiographynophoto}

\begin{IEEEbiographynophoto}
{Zhengqing Zhou} received the B.S. degree in mathematics from University of Science and Technology of China in 2016, and the Ph.D. degree in mathematics from Stanford University in 2021. His research interests include distributionally robust optimization, statistical machine learning, and applied probability. 
\end{IEEEbiographynophoto}

\begin{IEEEbiographynophoto}
{Vinod M. Prabhakaran} (Member, IEEE) received the M.E. degree from the Indian Institute of Science in 2001 and the Ph.D. degree from the University of California, Berkeley in 2007. He was a Post-Doctoral Researcher at the Coordinated Science Laboratory, University of Illinois, Urbana-Champaign from 2008 to 2010 and at Ecole Polytechnique Fédérale de Lausanne, Switzerland in 2011. Since 2011, he has been at the School of Technology and Computer Science at the Tata Institute of Fundamental Research, Mumbai. His research interests are in information theory, communication, cryptography, and signal processing.

He has received the Tong Leong Lim Pre-Doctoral Prize and the Demetri Angelakos Memorial Achievement Award from the EECS Department, University of California, Berkeley, and the Ramanujan Fellowship from the Department of Science and Technology, Government of India. He was an Associate Editor for IEEE TRANSACTIONS
ON INFORMATION THEORY during 2016-19.
\end{IEEEbiographynophoto}

\begin{IEEEbiographynophoto}
{Tsachy Weissman} (Fellow, IEEE) has been on the faculty of the Electrical Engineering department at Stanford since 2003, conducting research in and teaching the science of information, with applications spanning genomics, neuroscience, and technology.  He has served and still does on editorial boards for scientific journals, technical advisory boards in industry, and as founding director of the Stanford Compression Forum. His recent initiatives at Stanford include the STEM2SHTEM science and humanities high school internship program, and Stagecast, a low-latency video platform allowing actors and singers to perform together in real-time while geographically distributed. IEEE fellow, he has received multiple awards for his research and teaching, including best paper awards from the IEEE Information Theory and the Communications societies, and best student authored paper awards in the top conferences of his areas of scholarship. He has prototyped some of Guardant Health's first algorithms for early detection of cancer from blood tests, and has more recently co-founded and sold Compressable to Amazon, where he is now working to reduce humanity's carbon footprint by compressing its data. His favorite gig to date was being an advisor to the HBO show “Silicon Valley”.
\end{IEEEbiographynophoto}
\end{document}